\documentclass{article}

\usepackage[english]{babel}

\usepackage[top=1in,bottom=1in,left=1in,right=1in]{geometry}


\usepackage[bookmarks=true,hypertexnames=false,pagebackref]{hyperref}
\hypersetup{colorlinks=true, citecolor=blue, linkcolor=red, urlcolor=blue}
\setlength{\marginparwidth}{2cm} 
\usepackage[textsize=tiny]{todonotes}

\usepackage{mathrsfs}
\usepackage{graphicx}

\usepackage{amsthm,amsmath,amssymb}
\usepackage{xifthen}
\usepackage[ruled,linesnumbered,vlined]{algorithm2e}

\usepackage{cleveref} 

\usepackage{float}
\usepackage{xspace}
\usepackage{subfigure}
\usepackage{bbm}
\usepackage{natbib}
\usepackage{thmtools} 
\usepackage{thm-restate}
\newtheorem{example}{Example}
\newtheorem{theorem}{Theorem}

\def\1#1{\mathbbm 1\left[#1\right]}
\def\~#1{\mathsf{#1}}
\def\*#1{\mathbf{#1}} \def\+#1{\mathcal{#1}} \def\-#1{\mathrm{#1}}\def\^#1{\mathbb{#1}}\def\!#1{\mathtt{#1}}\def\@#1{\mathscr{#1}}
\newcommand{\set}[1]{\left\{#1\right\}}

\renewcommand{\mid}{\;\middle\vert\;}

\renewcommand{\Pr}[2][]{ \ifthenelse{\isempty{#1}}
  {\mathbf{Pr}\left[#2\right]} {\mathbf{Pr}_{#1}\left[#2\right]} }
\newcommand{\E}[2][]{ \ifthenelse{\isempty{#1}}
  {\mathbf{E}\left[#2\right]}
  {\mathbf{E}_{#1}\left[#2\right]} }
\newcommand{\Var}[2][]{ \ifthenelse{\isempty{#1}}
  {\mathbf{Var}\left[#2\right]}
  {\mathbf{Var}_{#1}\left[#2\right]} }
\newcommand{\Ent}[2][]{ \ifthenelse{\isempty{#1}}
  {\mathbf{Ent}\left[#2\right]}
  {\mathbf{Ent}_{#1}\left[#2\right]} }




\renewcommand{\emptyset}{\varnothing}

\newtheorem{definition}[theorem]{Definition}
\newtheorem*{remark}{Remark}

\newtheorem{proposition}[theorem]{Proposition}

\newtheorem{lemma}[theorem]{Lemma}

\newcommand{\SIR}{\textnormal{SIR}\xspace}
\newcommand{\SIS}{\textnormal{SIS}\xspace}
\newcommand{\TSIR}{\textnormal{TSIR}\xspace}

\newcommand{\IC}{\textnormal{IC}\xspace}
\newcommand{\LT}{\textnormal{LT}\xspace}
\newcommand{\RR}{\textnormal{RR}\xspace}

\newcommand{\IMM}{\mathsf{IMM}\xspace}
\newcommand{\NODE}{\mathsf{NodeSelection}\xspace}
\newcommand{\InfMax}{\textnormal{InfMax}\xspace}
\newcommand{\CELF}{\mathsf{CELF}\xspace}
\newcommand{\DegDis}{\mathsf{DegDis}\xspace}
\newcommand{\VRP}{\mathsf{VoteRank}^{++}}

\newcommand{\SIRIMM}{\!{SIRIMM}\xspace}
\newcommand{\TSIRIMM}{\!{TSIRIMM}\xspace}

\DeclareMathOperator{\argmax}{argmax}

\newcommand*\samethanks[1][\value{footnote}]{\footnotemark[#1]}

\title{A Thorough Comparison Between Independent Cascade and Susceptible-Infected-Recovered Models\thanks{A short version of this paper appears in AAAI'25.}}

\author{Panfeng Liu\thanks{Shanghai Jiao Tong University \{liupf22, guoliang.qiu, bstao, kuan.yang\}@sjtu.edu.cn}
        \and Guoliang Qiu\samethanks[2]
	\and Biaoshuai Tao\samethanks[2]
	\and Kuan Yang\samethanks[2]}

\begin{document}
\date{}
\maketitle

\begin{abstract}
    We study cascades in social networks with the independent cascade ($\IC$) model and the Susceptible-Infected-recovered ($\SIR$) model.
    The well-studied IC model fails to capture the feature of \emph{node recovery}, and the SIR model is a variant of the IC model with the node recovery feature.
    In the SIR model, by computing the probability that a node successfully infects another before its recovery and viewing this probability as the corresponding IC parameter, the SIR model becomes an ``out-going-edge-correlated'' version of the IC model: the events of the infections along different out-going edges of a node become dependent in the SIR model, whereas these events are independent in the IC model.
    In this paper, we thoroughly compare the two models and examine the effect of this extra dependency in the SIR model.
    By a carefully designed coupling argument, we show that the seeds in the $\IC$ model have a stronger influence spread than their counterparts in the $\SIR$ model, and sometimes it can be significantly stronger.
    Specifically, we prove that, given the same network, the same seed sets, and the parameters of the two models being set based on the above-mentioned equivalence, the expected number of infected nodes at the end of the cascade for the IC model is weakly larger than that for the SIR model, and there are instances where this dominance is significant.

    We also study the influence maximization problem (the optimization problem of selecting a set of nodes as initial seeds in a social network to maximize their influence) with the SIR model.
    We show that the above-mentioned difference in the two models yields different seed-selection strategies, which motivates the design of influence maximization algorithms specifically for the SIR model.
    We design efficient approximation algorithms with theoretical guarantees by adapting the reverse-reachable-set-based algorithms, commonly used for the $\IC$ model, to the $\SIR$ model.
\end{abstract}


\section{Introduction}
\label{sec:Intro}
The study of information diffusion in social networks, such as Facebook, Twitter, and WeChat, has garnered significant attention in the fields of communication media and social science \citep{KKT15,DPRM2001,BJJR1987,RD2002,Rigobon2002,PV01,KR2010}.
Information diffusion is typically characterized by \emph{cascading}---a fundamental social network process in which a number of nodes, called \emph{seeds}, initially possess a certain attribute or piece of information and may spread to their neighbors.

Numerous diffusion models have been developed so far. Among them, the \emph{independent cascade model}~\citep{KKT15} is well-known and extensively studied.
In the independent cascade model (hereinafter denoted by $\IC$), each edge $(u,v)$ is assigned a probability $p_{u,v}$. An infected node $u$ attempts to infect its neighbor $v$ \emph{only once}, with a success probability of $p_{u,v}$. The events of successful infections along different edges are independent.
In the $\IC$ model, an infected node remains infected throughout the cascade.

However, in many real-world scenarios, an infected node may \emph{recover}.
This phenomenon can be observed in various situations. For example, consider when an individual subscribes to a magazine or signs up for a fitness membership card under the influence of friends; eventually, this subscription or membership may be terminated. This termination could result from factors such as the user losing interest in the service or the subscription/membership expiring. Another example arises when someone introduces a new product to a friend; initially, they may promote it, but eventually, they tire of advertising it after a few days.
Similarly, when a user shares a post on Twitter, their friends may retweet it only for a short period before it becomes overshadowed by other posts.
The recovery of nodes may signify the loss of a certain attribute or the end of spreading this attribute, both of which undoubtedly impact the cascading process.

The concept of node recovery is effectively captured by \emph{epidemic models}. 
These models divide individuals into distinct states, including \emph{susceptible}, \emph{infected}, and \emph{recovered}. Various models have been formulated based on feasible state transitions \citep{kermack1927contribution,ALLEN1994,Priscilla2009}. 
The \emph{Susceptible-Infected-Recovered model}~\citep{kermack1927contribution} (hereinafter denoted by $\SIR$) includes the process from susceptible to infected and eventually to recovery and permanent immunity. 
In the $\SIR$ model, each node $u$ has a \emph{recover rate} $\gamma_u$. 
In each round, an infected node $u$ attempts to infect each neighbor $v$, succeeding with probability $\beta_{u,v}$, and then the node $u$ recovers with probability $\gamma_u$.
If $u$ does not recover, it remains infected and continues to attempt to infect its neighbors in the next round and subsequent rounds until it recovers.
Once $u$ recovers, it can no longer be infected.
Obviously, after a sufficiently long time, all infected nodes will be recovered, and nodes can only be either susceptible or recovered.
From the applicational perspective, we are interested in the number of the nodes \emph{that have been infected} (e.g., in the examples in the previous paragraph, the advertiser immediately receives benefits when the users pay the subscription fees/membership fees).
Equivalently, we are interested in the number of the \emph{recovered} at the end of the cascade.

\paragraph{Plausible similarity between IC and SIR models.}
Previous work has observed the similarity between $\IC$ and $\SIR$ (see, e.g., Chapter 8.2 in \cite{chen2020network}).
In the $\SIR$ model, the probability that node $u$ infects a neighbor $v$ in some round is given by
\begin{equation}\label{eqn:IC-SIR}
p_{u,v}=\sum_{t=1}^\infty \gamma_{u}(1-\gamma_u)^{t-1}\left(1-(1-\beta_{u,v})^t\right).
\end{equation}
Here, $\gamma_u(1-\gamma_u)^{t-1}$ represents the probability of node $u$ recovering in round $t$, and $(1-(1-\beta_{u,v})^t)$ is the probability of node $u$ successfully infecting $v$ in at least one of the $t$ rounds.
By considering $p_{u,v}$ in \Cref{eqn:IC-SIR} as the $\IC$ parameter, we can observe the similarity in the cascading processes between both models. In the $\SIR$ model, we can ``aggregate'' the infection attempts along each edge over multiple rounds, and the overall probability becomes the $\IC$ parameter.

However, this observation raises a question and thus leads to the distinction between the $\IC$ model and the $\SIR$ model. In the $\SIR$ model, due to this aggregation, infections along the incident edges to a node are no longer independent.
Intuitively, the event of $u$ successfully infecting one of its neighbors $v_1$ is positively correlated with the event of $u$ infecting another neighbor $v_2$,
as success on one edge increases the likelihood of $u$ not recovering for a longer period, thus enhancing the probability of success on another edge. 
This dependency creates a difference between the two models. Our objective is to explore whether this additional dependency on outgoing edges enhances or hampers the spread of information. Consequently, we consider the $\SIR$ model as an outgoing-edges-dependent variant of the $\IC$ rather than following the original $\SIR$ definition.

From the above analysis, we can interpret the $\SIR$ model by the $\IC$ model if the above dependency can be ignored.
Past literature has noticed the similarity and this seemingly ``slight'' difference between the two models.
However, it is unclear how significant this difference is.
For example, given the same graph and the same seed set, and assuming that the parameters setting in $\IC$ and $\SIR$ satisfy \Cref{eqn:IC-SIR}, is the spread of the seeds in these two models substantially different?

\paragraph{Influence maximization with SIR model.}
The \emph{influence maximization problem} ($\InfMax$), proposed by \cite{KKT15,RD2002,DPRM2001}, is the problem of selecting a set of ``influencers'' (also known as \emph{seeds}) in social networks to maximize the expected number of influenced agents.
The $\InfMax$ problem has attracted remarkable attention from researchers due to its wide range of applications, including 
social advertising \citep{CJ2011,DPRM2001,RD2002}, product adoption \citep{BJJR1987,Bass1976,MMF1990,TKE2011}, disease analysis \citep{ZLSZ2018}, rumor control \citep{WP2017} and social computing \citep{LLWF22}.
More comprehensive literature reviews can be found in books~\citep{chen2020network,chen2022information} and survey~\citep{li2018influence}.

The $\InfMax$ problem with the $\IC$ model is well-studied.
In the first paper where the model is proposed, \cite{KKT15} show that a simple greedy algorithm achieves a $(1-1/e)$-approximation, and no polynomial time $(1-1/e+\epsilon)$-approximation algorithm exists assuming $\text{P}\neq\text{NP}$.
After this, extensive work has focused on designing faster algorithms while keeping the theoretical approximation guarantee~\citep{BBCL2014, CPL2012, GLL2011, TSX2015, LKGFVG2007, KKT15}.
One of the most successful types of such algorithms is based on \emph{reverse-reachable sets} \citep{BBCL2014,TYXY14,TSX2015,CW18}, which yields a near-linear time algorithm with the $(1-1/e)$ approximation ratio.

However, for those epidemic models including the $\SIR$ model, most of the existing results have focused on the \emph{dynamic} of these epidemic models and analyzed equilibrium states \citep{NS20, KKT19,EGK19,LW20,KAKKSBS20}. However, few studies have explored them \emph{in the context of influence maximization}.
On the other hand, maximizing the number of nodes that \emph{have been infected at least once} remains a natural problem in multiple real-world applications.
For example, in viral marketing, advertisers receive benefits or payments based on nodes' infection, irrespective of subsequent recoveries.
Nonetheless, nodes' recoveries can negatively affect the cascading process. Therefore, advertisers must consider this factor to attain a more comprehensive understanding of the cascade, enabling them to identify the optimal initial influencers.

We have previously seen the similarity and the subtle difference between the $\IC$ model and the $\SIR$ model.
Regarding $\InfMax$, are the seeding strategies considerably different for the two models even if the parameters satisfy \Cref{eqn:IC-SIR}?
If the difference is insignificant, the $\InfMax$ algorithms for the $\IC$ model can be directed applied to the $\SIR$ model by reducing the $\SIR$ model to the $\IC$ model based on \Cref{eqn:IC-SIR}.
Otherwise, we should look for algorithms that are specifically designed for the $\SIR$ model.

\subsection{Our Contributions}

\paragraph{Stronger propagation effect in the IC model.} 
As mentioned earlier, in the SIR model, through \Cref{eqn:IC-SIR}, we can compute the probability that a vertex $u$ ``eventually'' infects $v$ before its recovery and obtain an ``aggregated probability'' $p_{u,v}$.
This establishes an equivalence between the $\SIR$ model and the $\IC$ model, except that the events that the infections succeed along different edges become dependent in the $\SIR$ model.
We intensively study how this seemingly minor difference in dependency affects the influence spread and the seeding strategy.

Firstly, in \Cref{sect:liveandRRset,section:IC-and-SIR}, we show that this dependency can only \emph{harm} the influence spread.
Specifically, we prove the following novel observation: when comparing the two models, given the same graph, the same seed set, and the cascade model parameters related by \Cref{eqn:IC-SIR}, the expected number of infected vertices under the $\IC$ model is always weakly larger than that under the $\SIR$ model.
This is proved by developing a novel coupling method.

Furthermore, we show that the above gap in the expected number of infections can be made arbitrarily large in some instances.
In addition, we also observed that, in certain networks, the optimal seeding strategies for the $\SIR$ and $\IC$ models are different, leading to significant differences in propagation outcomes.
This motivates the need for designing algorithms specifically for the $\SIR$ model.

\paragraph{Approximation algorithms for the $\SIR$ model and its variants.}
We remark that the influence spread functions for all the proposed epidemic models based on $\SIR$ are \emph{submodular} functions.
This implies that the simple greedy algorithm achieves a $(1-1/e)$-approximation. 
However, the conventional greedy algorithm based on Monte-Carlo sampling is known to be slow in practice. We adapt those reverse-reachable-set-based algorithms to the $\SIR$ setting, which yields a $(1-1/e-\varepsilon)$-approximation algorithm with probability at least $1-n^{-\ell}$ for graphs with $n$ nodes and parameters $\varepsilon,\ell$, where the algorithm's running time is \emph{near-linear} in the number of nodes $n$.



\subsection{Related Work}
Many of the related papers have been discussed previously.
In this section, we discuss further related work on the algorithmic aspect.

In the most general case, approximating $\InfMax$ to within a factor of $n^{1-\epsilon}$ is NP-hard~\citep{KKT15}. 
The strong inapproximability comes from the \emph{nonsubmodularity} of the diffusion models: strong inapproximability results are known for $\InfMax$ even under very specific simple nonsubmodular diffusion models; as a result, researchers focus on heuristics (see, e.g., \citep{KKT15,chen2009approximability,angell2017don,schoenebeck2019beyond,schoenebeck2022think}).
On the other hand, when dealing with \emph{submodular} diffusion models, a simple greedy algorithm achieves an $(1 - 1/e)$-approximation:
specifically, the global influence function $\sigma(\cdot)$ is a submodular set function if the diffusion model is submodular~\citep{MR2010}, so the greedy algorithm achieves an $(1-1/e)$-approximation~\citep{nemhauser1978analysis}.
The approximation guarantee can be even slightly better than $(1-1/e)$ if the network is undirected~\citep{khanna2014influence,schoenebeck2020limitations}.
On the inapproximability side, it is known that the $(1-1/e)$-approximation cannot be improved for the $\IC$ model and directed graphs, unless $\text{P}=\text{NP}$~\citep{KKT15}.
Weaker APX-hardness results are known for undirected graphs and other submodular models such as the \emph{linear threshold model}~\citep{schoenebeck2020influence}.

Unfortunately, the standard greedy algorithm is limited by its scalability, and therefore a large number of $\InfMax$ algorithms have been designed and tested on real networks. As a result, numerous solutions have been developed to optimize the efficiency of reaching the broadest audience possible on a massive network.

Various heuristic metrics have been developed based on the topological attributes of nodes. For example, closeness centrality measures the importance of each node by computing the reciprocal of its average distance from other nodes \citep{SG1966}. Degree centrality, on the other hand, quantifies the influence of a node based on the number of its direct neighbors \citep{Freeman1978}. For instance, the $\DegDis$~\citep{WYS09} measures the influence of each node by iteratively selecting the nodes with the highest value that based on degrees, then discount the seeds' neighbors to obtain $k$ seeds. A similar idea was applied in the $\VRP$ algorithm \citep{LLFY21}, which utilizes a voting mechanism to measure the influence of nodes in networks. After selecting the node with the highest \emph{score}, the $\VRP$ discounts the \emph{voting ability} of the seed node's neighbors to reduce the overlap of influence between seed nodes. This enables the influence of the seed nodes to be spread as widely as possible.

However, most of the existing techniques either sacrifice approximation guarantees for practical efficiency, or the other way around. By leveraging submodularity, the $\CELF$ algorithm has been developed to efficiently handle large-scale problems. It achieves near-optimal placements while being 700 times faster than a simple greedy algorithm \citep{LKGFVG2007}. The methods in references ~\citep{BBCL2014,CPL2012,GLL2011,TSX2015,KKT15} that provide $\left(1-1/e-\epsilon \right)$-approximation algorithms with a cost that is lower than greedy method. 

One notable approach among these methods is the $\IMM$ algorithm (Influence Maximization with Martingales), introduced by \cite{TSX2015}. This algorithm operates with an expected time complexity of \(O\left((k+l) \cdot (n+m) \cdot \log n / \varepsilon^2\right)\) and provides an \(\left(1-1/e-\varepsilon \right)\)-approximation with a probability of at least \(1 - 1 / n ^ \ell\) under the triggering model. It addresses $\InfMax$ challenges by employing a martingale method, specifically estimating the number of reverse reachable sets. This feature allows $\IMM$ to significantly reduce redundant computations that were deemed unavoidable in the reference~\citep{TYXY14}.


Other perspectives on studying the $\InfMax$ problem, such as learning-based, adaptivity, data-driven, and reinforcement-learning-based approaches, are discussed in references~\citep{WLWCW19,peng2019adaptive,chen2019adaptivity,d2020improved,LKTLC21,chen2022adaptive,KXWYL23,MSLLLLN22,li2022link,d2023better}.

\subsection{Structure of This Paper}
In~\Cref{preliminaries}, we define the two diffusion models studied and the influence maximization problem.
In~\Cref{sect:liveandRRset}, we discuss the live-edge graph formulation of the $\IC$ and $\SIR$ models, and we adapt the reverse-reachable set technique to the $\SIR$ model.
These will be used in all the later sections.
Our results for the theoretical comparison of the $\IC$ and $\SIR$ models are in~\Cref{section:IC-and-SIR}.
In~\Cref{sec:SIRAlgorithms}, we discuss the algorithm design from the aspect of the $\InfMax$ problem with the $\SIR$ model. 
\section{Model and Preliminaries \label{preliminaries}}


    A social network can be represented as a directed graph $G=(V,E)$ with $n$ nodes (i.e., \emph{users}) and $m$ directed edges (i.e., \emph{social connections} between \emph{users}). For any directed edge $\left( u, v \right) \in E$, we say $\left( u, v \right) $ is an incoming edge (resp. outgoing edge) of $v$ (resp. $u$). We also call $u$ an incoming neighbor of $v$ and $v$ an outgoing neighbor of $u$.

    \begin{definition}[Diffusion Model and Influence Spread \citep{KKT15}]\label{def:diffusion}
       Given a social network (directed graph) $G=(V,E)$, a \textbf{diffusion model} $\Gamma$ is a (possibly random) function that maps from a vertex set $S$ (the seeds that are initially infected) to a vertex set $\Gamma_G(S)$ (the set of influenced vertices at the end of the spreading). 
       We omit the subscript $G$ when there is no confusion.
       Denote by $\sigma(S)=\E{|\Gamma(S)|}$ the expected number of vertices influenced by $S$.
    \end{definition}
    
    The goal of the \emph{influence maximization} problem ($\InfMax$) is to select at most $k$ nodes as \emph{seeds} to maximize the number of \emph{influenced} nodes on the social network $G$.
    
    \begin{definition}[Influence Maximization~\citep{KKT15}]
         Given a social network (directed graph) $G=(V,E)$, a diffusion model $\Gamma$, and a positive integer $k$, the objective of influence maximization is to select a subset $S\subseteq V$ with $|S|\leq k$ that maximizes the expected influence spread $\sigma(S)$.
    \end{definition}

    
\subsection{Independent Cascade Model}

In the Independent Cascade (IC) model~\citep{KKT15}, the nodes could be \emph{active} or \emph{inactive} in a given directed graph $G=(V,E)$. A node $v\in V$ could be activated by each of its incoming active neighbors independently. More precisely, each directed edge $e=\left(u, v\right)\in E$ is associated with an activation probability $p_e=p_{u,v}\in [0,1]$. The influence spread of an active seed set $S\subseteq V$ unfolds in discrete timestamps as follows.
\begin{enumerate}
    \item At timestamp $0$, only nodes in $S$ are active.
    \item At each timestamp $t=1,2,\dots$, each newly activated node $u$ from the previous timestamp gets one chance to activate its inactive outgoing neighbors; and for each inactive outgoing neighbor $v$, $u$ tries to activate $v$ with a probability $p_{u,v}$. The attempts to activate neighbors are independent of each other. If multiple incoming neighbors of an inactive node attempt to activate it, each attempt is considered separately with its own probability.
    \item The diffusion process terminates when no inactive node gets activated in a timestamp. 
\end{enumerate}
In the above process, once a node becomes active, it remains active throughout.
When we are considering the IC model, i.e., $\Gamma=\text{IC}_{\boldsymbol{p}}$ where $\boldsymbol{p}=\set{p_e}_{e\in E}$ in~\Cref{def:diffusion}, $\IC_{\boldsymbol{p},G}(S)$ is the set of active nodes at the end of the above diffusion process.



In~\cite{KKT15}, it is shown that the influence spread of the $\IC$ model can be simulated by evaluating the number of \emph{reachable} nodes from the seed set in the \emph{live-edge graph}. To be specific, the random live-edge graph $\+G_{\IC}(G,\boldsymbol{p})$ corresponding to the instance $(G,\boldsymbol{p})$ under the $\IC$ model is generated by including each directed edge $e\in E$ in $G$ with probability $p_e$. Each node $v$ is active if it is reachable from the seed set $S$, that is, there exists a directed path from some nodes $v\in S$ to $u$ in the live-edge graph.
More discussions about live-edge graphs can be found in~\Cref{sect:liveandRRset}.

\subsection{Susceptible-Infected-Recovered Model}



 \noindent In the Susceptible-Infected-Recovered Model ($\SIR$) model~\citep{kermack1927contribution}, the nodes could be \emph{susceptible}, \emph{infected}, or \emph{recovered}. 
 The $\SIR$ diffusion process is characterized by a directed graph $G=(V,E)$ together with two sets of parameters $\boldsymbol{\beta}=\set{\beta_e}_{e\in E}$ and $\boldsymbol{\gamma}=\set{\gamma_v}_{v\in V}$, where each node $v\in V$ is assigned a recovery probability $\gamma_v \in (0,1]$ and each directed edge $e=(u,v)$ is associated with an infection probability $\beta_e=\beta_{u,v} \in (0,1]$. The influence spread of an infected seed set $S\subseteq V$ unfolds in discrete timestamps as follows:
 \begin{enumerate}
     \item At timestamp $0$, all nodes in the seed set $S$ are initially infected, while the remaining nodes are considered susceptible.
     \item At each timestamp $t=1,2,\dots$, each node $u$ that is infected at timestamp $t$ performs the following operations sequentially:
     \begin{itemize}
         \item for each susceptible outgoing neighbors $v$, $u$ infects $v$ with probability $\beta_{u,v}$;
         \item $u$ gets recovered with a recovery probability $\gamma_u$ and remains infected otherwise.
     \end{itemize}
     \item The diffusion process terminates when all nodes are either recovered or susceptible. 
 \end{enumerate}
When we are considering the $\SIR$ model, i.e., $\Gamma=\SIR_{\boldsymbol{\beta},\boldsymbol{\gamma}}$ in~\Cref{def:diffusion}, $\SIR_{\boldsymbol{\beta},\boldsymbol{\gamma},G}(S)$ is the set of recovered nodes at the end of the above diffusion process.

 \begin{remark}
    It is implied that once a node becomes recovered in the $\SIR$ model, it can never be infected again or infect other nodes.
    Note that all infected nodes will eventually be recovered after a sufficiently long time.
 \end{remark}
 


We also consider the spread of influence within a specific time frame $T$ in the $\SIR$ model, referred to as the \emph{Truncated Susceptible-Infected-Recovered} $(\TSIR)$ Model. 
In this model, with a seed set $S\subseteq V$ infected at time $0$, the diffusion process unfolds as described earlier but stops at time $T$. 
The influence spread $\TSIR_{\boldsymbol{\beta},\boldsymbol{\gamma},T,G}(S)$ is defined as the number of nodes that have been infected by time $T$.
Thus, when taking $\Gamma=\TSIR_{\boldsymbol{\beta},\boldsymbol{\gamma},T}$, the ``influenced vertices'' in \Cref{def:diffusion} are those that are infected or recovered.

\section{Live-edge Graph and Reverse Reachable Set\label{sect:liveandRRset}}
Similar to the equivalent formulation proposed by \citet{KKT15} for the $\IC$ model, we demonstrate that the influence spread of the diffusion models we consider can be formulated using a model-specific live-edge graph. This live-edge graph formulation allows us to study the models using the unified characterization of the \emph{reverse reachable set}~\citep{BBCL2014} in our later discussion.

\subsection{Live-edge Graph Formulation}

The live-edge graph of a diffusion model is a random spanning sub-graph of $G$ that reflects the diffusion behavior. Specifically, each edge in the original network is either ``\emph{live}'' (active) or ``\emph{blocked}'' (inactive), based on the model's spread probabilities, and the live-edge graph is the subgraph consisting of live edges. 
Roughly speaking, each edge $(u, v)$ is live (included in the live-edge graph) with a probability, which represents the likelihood of $u$ activating $v$ during the influence spread period.
Thus, a live-edge graph is an equivalent representation of a spreading process on $G$ based on a diffusion model, and it serves as a sampling graph that captures one possible propagation of the model. By evaluating the number of \emph{reachable} nodes from a given seed set $S$ in the \emph{live-edge graph}, we can simulate the spreading ability of $S$ in the diffusion process.

\begin{definition}[Live-edge Graph Formulation] Given a social network (directed graph) $G=(V,E)$ with the diffusion model $\Gamma$, we say $\Gamma$ is a live-edge graph diffusion model if there exists a measure $\mu$ over the collection of (possibly edge-weighted) spanning sub-graphs of $G$, such that the influence spread $\sigma(S)$ equals the expected number of reachable nodes from $S$ in the live-edge graph $\+G$ sampled from $\mu$ for any seed set $S\subseteq V$.
\end{definition}

For convenience, we abuse $\+G$ to represent the edge set $E(\+G)$ of the live-edge graph $\+G$ and thus $e\in \+G$ means that $e$ is one of the edges in $\+G$ for any $e\in E$ in subsequent discussion.


In the following, we introduce the live-edge graph formulation for the $\SIR$ model.
For the $\TSIR$ model, the live-edge graph characterization can be found in \Cref{sec:LiveedgeTSIR}.

%


\subsubsection{The Live-edge Graph for the $\SIR$ Model}\label{subsec:live-edge-SIR}

The influence spread of seed nodes in the $\SIR$ model on instance $\left(G=(V,E),\boldsymbol{\beta}, \boldsymbol{\gamma}\right)$ can be characterized by the number of reachable nodes in the live-edge graph
\[\+G_{\SIR}=\+G_{\SIR}(G,\boldsymbol{\beta},\boldsymbol{\gamma})=\+G_{\SIR}\left(G,\set{\boldsymbol{R}_v}_{v\in V},\set{\boldsymbol{I}_e}_{e\in E}\right)\] induced by a collection of independent variables
$$\boldsymbol{R}_v=\left(R_{v,1},R_{v,2},\dots\right),\quad\mbox{and}\quad \boldsymbol{I}_e=\left(I_{e,1},I_{e,2},\dots\right),$$ 
where $R_{v,t}\sim \!{Bern}(\gamma_v)$ for any $v\in V, t\in \^N^+$ and $I_{e,t}\sim \!{Bern}(\beta_e)$ for any $e\in E, t\in \^N^+$ are Bernoulli random variables. 
Here, $R_{u,t} =1$ denotes that $u$ is recovered at the $t$-round after $u$'s infection, and $R_{v,t}=0$ otherwise. Similarly, $I_{(u,v),t}$ is the indicator random variable for the event that $u$ successfully infects $v$ at the $t$-th round after $u$ is infected.
Specifically, each directed edge $ e= \left( u, v \right) \in E$ is included in $\+G_{\SIR}$ if there exists some $t^*\geq 1$ such that $I_{e,t^*}=1$ and $R_{u,[t^*-1]}=( R_{u, 1}, R_{u, 2},...,R_{u, t^*-1})$ is a sequence of zeros with length $t^*-1$.
In other words, the influence spread from node $u$ to node $v$ succeeds at timestamp $t^*$ before the node $u$ gets recovered, and the edge $e$ represents a successful infection event between nodes $u$ and $v$ that could have occurred. Here, we say a node $v$ is reachable from $u$ in $\+G_{\SIR}$ if there exists a directed path from $u$ to $v$ in $\+G_{\SIR}$.
An example is illustrated below.

\begin{example}
\emph{
Consider a graph $G=(V,E)$ with $V=\{u,v,w\}$ and $E=\{(u,v),(u,w)\}$. Assume we have sampled indicator random variables $\boldsymbol{R}_{u}=(R_{u,1},R_{u,2},\ldots) = (0,0,0,1,\ldots)$, $\boldsymbol{I}_{(u,v)}= (I_{(u,v),1},I_{(u,v),2},\ldots) = (0,0,1,\ldots)$, and $ \boldsymbol{I}_{(u,w)}= (I_{(u,w),1}, I_{(u,w),2},\ldots) =(0,0,0,0,0,1,\ldots)$.  In this example, $u$ is recovered at round $4$. The infection from $u$ to $v$ is successful at round $3$, and the infection from $u$ to $w$ is successful at round $6$.
Therefore, $u$ can successfully infect $v$ before $u$'s recovery, while $u$ cannot infect $w$ before its recovery.
Correspondingly, in the live-edge graph, $(u,v)$ is live and $(u,w)$ is blocked, i.e., $(u,v) \in \mathcal{G}_{\text{SIR}}$ and $(u,w) \notin \mathcal{G}_{\text{SIR}}$.}
\end{example}

The proof of the following proposition is straightforward: this is just a rephrasing of the same stochastic process. It is deferred to \Cref{sec:appendix}. 

\begin{restatable}{proposition}{liveedgeSIR}\label{prop:live-edge-graph-SIR}
    $\+G_\SIR$ defined above is a live-edge graph formulation of $\textnormal{SIR}_{\boldsymbol{\beta},\boldsymbol{\gamma}}$, namely, 
    $$\sigma_{\SIR}(S)=\E{\mbox{the number of reachable nodes from $S$ in $\+G_{\SIR}$}}.$$
\end{restatable}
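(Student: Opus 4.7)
The plan is to prove \Cref{prop:live-edge-graph-SIR} by a sample-path coupling between the $\SIR$ diffusion process and the live-edge graph $\+G_\SIR$, under which the set of vertices that are ever infected in $\SIR$ coincides almost surely with the set of vertices reachable from $S$ in $\+G_\SIR$. Taking expectations of this almost-sure set identity then yields $\sigma_{\SIR}(S)=\E{|\{v:v\text{ reachable from }S\text{ in }\+G_\SIR\}|}$.

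First I would sample the Bernoulli collection $\{R_{v,t}\}_{v\in V,t\geq 1}$ and $\{I_{e,t}\}_{e\in E,t\geq 1}$ once and reuse it in both processes. In the $\SIR$ simulation, when a vertex $u$ first becomes infected at some global time $T_u$, I use $R_{u,t}$ as $u$'s recovery coin in the $t$-th round after its infection (so $u$ recovers at the end of round $\tau$, with $\tau$ the least index with $R_{u,\tau}=1$), and I use $I_{(u,v),t}$ as the coin for $u$'s infection attempt on a susceptible neighbor $v$ at that same round. This matches the definition of $\+G_\SIR$ bit-for-bit.

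Next I verify both inclusions under the coupling. The forward direction is immediate: if $v\in\SIR_{\boldsymbol{\beta},\boldsymbol{\gamma},G}(S)$, then some witnessing infection chain $s=v_0\to v_1\to\cdots\to v_k=v$ was realized, where $v_{i+1}$ was infected by $v_i$ at some round $t^*_i$ of $v_i$'s infection. By the coupling this round satisfies $R_{v_i,1}=\cdots=R_{v_i,t^*_i-1}=0$ and $I_{(v_i,v_{i+1}),t^*_i}=1$, so $(v_i,v_{i+1})\in\+G_\SIR$, giving reachability from $S$. For the reverse direction, given a live-edge path $s=v_0\to v_1\to\cdots\to v_k=v$ in $\+G_\SIR$, I induct on $i$ to show each $v_i$ is infected in $\SIR$. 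Assuming $v_i$ gets infected at time $T_{v_i}$, the witness round $t^*_i$ for the live edge $(v_i,v_{i+1})$ has $R_{v_i,1}=\cdots=R_{v_i,t^*_i-1}=0$, so $v_i$ has not yet recovered at round $t^*_i$, and by the coupling $v_i$ then attempts to infect $v_{i+1}$ with $I_{(v_i,v_{i+1}),t^*_i}=1$.

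The one subtlety, which I regard as the main obstacle, is that the live-edge witness round need not coincide with the round at which $v_{i+1}$ is actually infected in the $\SIR$ dynamics: by time $T_{v_i}+t^*_i-1$ the vertex $v_{i+1}$ may already have been infected via another path. I resolve this by splitting into two cases at the inductive step: either $v_{i+1}$ is still susceptible when $v_i$'s attempt fires, in which case the attempt succeeds and $v_{i+1}$ becomes infected then; or $v_{i+1}$ is no longer susceptible, in which case it is already infected (or recovered) and the inductive claim again holds. In either case $v_{i+1}\in\SIR_{\boldsymbol{\beta},\boldsymbol{\gamma},G}(S)$, closing the induction and thus the proposition.
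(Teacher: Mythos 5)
Your proposal is correct and takes essentially the same route as the paper: the paper's proof also drives the $\SIR$ dynamics with the very same Bernoulli families $\set{\boldsymbol{R}_v}_{v\in V}$ and $\set{\boldsymbol{I}_e}_{e\in E}$ that define $\+G_{\SIR}$ (via a per-node round counter $j_u$) and then identifies the ever-infected set with the set reachable from $S$. Your explicit handling of the case where $v_{i+1}$ is already infected or recovered via another path is a detail the paper leaves to the reader, but it is the same coupling argument.
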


As mentioned in \Cref{sec:Intro}, we are interested in comparing the two diffusion processes corresponding to $\IC$ and $\SIR$ respectively with the equal marginal probability for $u$ successfully infecting $v$ along each edge $(u,v)$.
\Cref{eqn:IC-SIR} ensures this, which is described in the proposition below (whose proof is straightforward).

\begin{proposition}
    Given any directed graph $G=(V,E)$ together with the diffusion models $\IC_{\boldsymbol{p}}$ and $\SIR_{\boldsymbol{\beta,\gamma}}$ where the parameters satisfying \Cref{eqn:IC-SIR}, it holds that $\Pr{e\in \+G_{\IC}}=\Pr{e\in \+G_{\SIR}}$ for each $e\in E$.
\end{proposition}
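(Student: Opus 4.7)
The plan is to compute both marginal probabilities explicitly from the respective live-edge formulations and show that each equals $p_{u,v}$, with the $\SIR$ side reducing to $p_{u,v}$ precisely via the series expansion in \Cref{eqn:IC-SIR}.

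First, for the $\IC$ side, I would simply invoke the live-edge construction of $\+G_{\IC}$ recalled in the excerpt: each directed edge $e=(u,v)$ is included independently with probability $p_e$. Thus $\Pr{e\in\+G_{\IC}}=p_{u,v}$, and no further work is needed.

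Next, for the $\SIR$ side, fix an edge $e=(u,v)$ and introduce the random variable $T_u\defeq\min\set{t\geq 1\cmid R_{u,t}=1}$ representing the round in which $u$ recovers. Since the $R_{u,t}$ are i.i.d.\ $\!{Bern}(\gamma_u)$, $T_u$ is geometric with $\Pr{T_u=t}=\gamma_u(1-\gamma_u)^{t-1}$. By the construction of $\+G_{\SIR}$ in \Cref{subsec:live-edge-SIR}, the edge $e$ lies in $\+G_{\SIR}$ iff there is some $t^*\geq 1$ with $I_{e,t^*}=1$ and $R_{u,1}=\cdots=R_{u,t^*-1}=0$; equivalently, iff at least one of $I_{e,1},\ldots,I_{e,T_u}$ equals $1$. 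Because the $\boldsymbol{I}_e$'s are independent of the $\boldsymbol{R}_u$'s and are i.i.d.\ $\!{Bern}(\beta_e)$, conditioning on $T_u=t$ gives
\[
\Pr{e\in\+G_{\SIR}\mid T_u=t}=1-(1-\beta_{u,v})^{t}.
\]
Summing over $t$ then yields
\[
\Pr{e\in\+G_{\SIR}}=\sum_{t=1}^{\infty}\gamma_u(1-\gamma_u)^{t-1}\left(1-(1-\beta_{u,v})^{t}\right),
\]
which is exactly the right-hand side of \Cref{eqn:IC-SIR}, hence equals $p_{u,v}=\Pr{e\in\+G_{\IC}}$.

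There is no real obstacle here; the only subtle point is correctly unpacking the ``there exists $t^*$'' condition defining inclusion in $\+G_{\SIR}$ and observing that it is equivalent to ``some infection attempt succeeds before recovery,'' so that the two events $\set{T_u\geq t^*}$ and $\set{I_{e,t^*}=1}$ can be handled by independence. Once this is done, the computation is just the identity \Cref{eqn:IC-SIR}, so the whole argument is a couple of lines.
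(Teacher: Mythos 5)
Your proof is correct and is exactly the ``straightforward'' computation the paper has in mind (the paper omits the proof, noting only that it follows from \Cref{eqn:IC-SIR}): you correctly unpack the live-edge condition as ``some infection attempt succeeds by round $T_u$,'' condition on the geometric recovery time, and recover the series defining $p_{u,v}$. No issues.
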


\subsubsection{The Live-edge Graph for the $\TSIR$ Model}\label{sec:LiveedgeTSIR}
Given a specific time threshold $T$, the influence spread of seed nodes in the $\TSIR$ diffusion model on instance $(G=(V,E),\boldsymbol{\beta},\boldsymbol{\gamma},T)$ can be characterized by the number of reachable nodes in the live-edge graph 
\begin{align*}
    \+G_{\TSIR}\!=\!\+G_{\TSIR}(G,\boldsymbol{\beta},\boldsymbol{\gamma}, T)=\+G_{\TSIR}\bigl(G,\!\{\boldsymbol{R}_v\}_{v\in V}\!,\!\{\boldsymbol{I}_e\}_{e\in E}\!,T\bigr)
\end{align*}
induced by a collection of independent Bernoulli random variables, similar to that of the $\SIR$ model together with the weight on the edges: an edge $e= \left( u, v \right) \in E$ is included in $\+G_{\TSIR}$ with infection span $w_e=w_{u,v}=t^*$ if $t^*$ is the smallest positive integer satisfying $I_{e,t^*}=1$ and $R_{u,[t^*-1]}$ is a sequence of zeros with length $t^*-1$. 

The infection span of each edge $(u,v)$ captures the timestamps needed to influence propagation from $u$ to $v$.
We call the node $v$ is $T$-reachable from $u$ in $\+G_{\TSIR}$ if there exists a directed path $P=(u_1=u,u_2,\dots,u_\ell=v)$ in $\+G_{\TSIR}$ such that $\sum_{i\in \ell-1} w_{u_i,u_{i+1}}\leq T$. 
The influenced nodes of any seed set $S\subseteq V$ on the graph $G=(V,E)$ then equals the number of $T$-reachable nodes from $S$ in $\+G_{\TSIR}$.  
An example is illustrated below.

\begin{example}
\emph{
Consider a graph $G=(V,E)$ with $V=\{u,v,w\}$ and $E=\{(u,v),(u,w),(v,w)\}$. Assume we have sampled indicator random variables $\boldsymbol{R}_{u}=(R_{u,1},R_{u,2},\ldots) = (0,0,0,1,\ldots)$, $\boldsymbol{I}_{(u,v)}= (I_{(u,v),1},I_{(u,v),2},\ldots) = (0,0,1,\ldots)$, $\boldsymbol{I}_{(u,w)}= (I_{(u,w),1}, I_{(u,w),2},\ldots) =(0,0,0,0,0,1,\ldots)$, $\boldsymbol{I}_{(v,w)}= (I_{(u,v),1},I_{(u,v),2},\ldots) = (0,1,\ldots)$, and we set $T=4$.
In the above example, $u$ is recovered at round $4$, $u$ infect $v$ at round $3$, $u$ infect $w$ at round $6$, and $v$ infect $w$ at round $2$. In the graph $\+G_{\TSIR}$, the edges $(u,v)$ and $(v,w)$ have infection spans of $3$ and $2$ respectively. Consequently, within the time threshold $T=4$, $u$ can successfully infect $v$, but it will fail to infect $w$ due to the shortest path from $u$ to $w$ having a total weight of $5$.}

\end{example}


Similarly, we can prove the following live-edge graph characterization for the $\TSIR$ model, the proof of which is similar to that of~\cref{prop:live-edge-graph-SIR}.
\begin{proposition}
    Given a social network (directed graph) $G=(V,E)$ with the $\TSIR_{\boldsymbol{\beta},\boldsymbol{\gamma},T}$ diffusion model, we have for any seed set $S\subseteq V$,
    \[\sigma_{\TSIR}(S)=\E{\mbox{the number of $T$-reachable nodes from $S$ in $\+G_{\TSIR}$}}.\]
\end{proposition}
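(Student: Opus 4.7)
The plan is to extend the coupling argument used to establish the $\SIR$ live-edge formulation (\Cref{prop:live-edge-graph-SIR}), by tracking the time at which each node becomes infected and then identifying this time with the shortest-weighted-path distance from the seed set in $\+G_{\TSIR}$.

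Under the coupling, once a node $u$ becomes infected at some time $\tau_u$, its actions at the $k$-th round after infection are driven by the variables $I_{(u,v),k}$ for each outgoing neighbor $v$ (governing infection attempts on susceptible neighbors) and by $R_{u,k}$ (governing recovery); this is precisely the randomness that determines whether each outgoing edge $(u,v)$ is live in $\+G_{\TSIR}$ and, if so, its infection span $w_{u,v}$. I would then prove the key claim: for every vertex $v$, the time $\tau_v$ at which $v$ is first infected in the $\TSIR$ process equals the shortest-weighted-path distance $d(v)$ from $S$ to $v$ in $\+G_{\TSIR}$, with both quantities set to $+\infty$ when no such event occurs. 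The direction $d(v)\leq \tau_v$ follows by induction on $\tau_v$: if $u$ first infects $v$ at time $\tau_v$, then $(u,v)$ must be a live edge with span $w_{u,v}\leq \tau_v-\tau_u$, so concatenating a shortest $S$-to-$u$ path with $(u,v)$ yields an $S$-to-$v$ path of total weight at most $\tau_v$. The direction $\tau_v\leq d(v)$ follows by induction along a fixed shortest path $(u_0=s,u_1,\dots,u_\ell=v)$, using that if $(u_i,u_{i+1})$ is live with span $w_{u_i,u_{i+1}}$ then $u_{i+1}$ is infected at most $w_{u_i,u_{i+1}}$ rounds after $u_i$ in the coupled process (possibly sooner, if $u_{i+1}$ is reached via an alternative route). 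Consequently, $v$ is infected by time $T$ in the $\TSIR$ process if and only if $v$ is $T$-reachable from $S$ in $\+G_{\TSIR}$, and taking expectations of the cardinalities of the two (identical) vertex sets yields the proposition.

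The main subtlety, where this argument departs from the one for \Cref{prop:live-edge-graph-SIR}, is reconciling the pre-sampled infection span $w_{u,v}$ with the actual attempt times in the diffusion process: the definition of $w_{u,v}$ uses Bernoulli variables that are oblivious to whether $v$ is still susceptible at the relevant rounds, so a would-be round-$w_{u,v}$ attempt by $u$ might in principle be preempted by an earlier infection of $v$ along another path. I handle this by using $w_{u,v}$ only as an upper bound on elapsed time when tracing back from an actual infection, and by propagating forward along a single fixed shortest path rather than attempting to track all potential propagation paths simultaneously. A small calculation along the way (checking that $I_{(u,v),k}=0$ for all $k<w_{u,v}$, which follows from the minimality in the definition of the span together with the all-zero condition on $R_{u,[w_{u,v}-1]}$) ensures that the $\TSIR$ dynamics indeed cannot infect $v$ via $u$ earlier than $\tau_u+w_{u,v}$, closing the induction. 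This shortest-path-on-a-random-weighted-DAG argument is the step that requires the most careful bookkeeping.
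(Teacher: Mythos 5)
Your proposal is correct and follows essentially the same route as the paper, which omits the proof and simply notes it is analogous to that of \Cref{prop:live-edge-graph-SIR} (driving the diffusion with the same random variables $\set{\boldsymbol{R}_v}_{v\in V},\set{\boldsymbol{I}_e}_{e\in E}$ that generate the live-edge graph). Your identification of the first-infection time with the shortest weighted distance in $\+G_{\TSIR}$, including the two inductions and the preemption subtlety, is exactly the bookkeeping that the paper's ``similarly'' implicitly requires for the truncated model.
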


\subsection{Reverse Reachable Set Characterization}
The influence spread of the aforementioned models can be characterized by the notion of the \emph{reverse reachable set} \citep{BBCL2014} under the live-edge graph formulation. 
 When considering a live-edge graph diffusion model, we sometimes use $\+G$ to represent its diffusion process instead of $\Gamma$. 
The definition of the reverse reachable set can be stated as follows.
\begin{definition}[Reverse Reachable Set]
Given a directed graph $G=(V,E)$ with the live-edge graph $\+G$, the reverse reachable set of a node $v \in V$, denoted by $\RR_{\+G}(v)$, is the set of all nodes in $\+G$ that can reach $v$. Furthermore, let $\RR_{\+G}$ be the random set $\RR_{\+G}(v)$ with $v$ selected uniformly at random from $V$.
\end{definition}


\begin{proposition}\label{lem:reverse-reachable-set-characterization}
    Given a directed graph $G=(V,E)$ with a live-edge graph diffusion model $\+G$, the probability that the diffusion process from any seed set $S\subseteq V$ can influence a node $v$ equals the probability that $S$ overlaps with the set $\RR_{\+G}(v)$, i.e., $\Pr{S\cap \RR_{\+G}(v)\neq \emptyset}$. Furthermore, the influence spread satisfies \[\sigma(S)=\sum\nolimits_{v\in V}\Pr{{S\cap \RR_{\+G}(v)}\neq \emptyset}=|V|\cdot\Pr{{S\cap \RR_{\+G}}\neq \emptyset} .\]
\end{proposition}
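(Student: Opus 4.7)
The plan is to unpack the definitions on both sides and exhibit the equivalence of two events on the same probability space (the randomness of the live-edge graph $\+G$), then apply linearity of expectation to get the influence spread formula, and finally average over a uniformly random vertex to obtain the rightmost expression.

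First I would fix a realization of the live-edge graph $\+G$ and a node $v\in V$. By the live-edge graph formulation, the set of nodes influenced by a seed set $S$ is exactly the set of nodes reachable from $S$ in $\+G$. Hence $v$ is influenced by $S$ in this realization if and only if there exists some $u\in S$ such that $u$ has a directed path to $v$ in $\+G$. By the definition of $\RR_{\+G}(v)$, which is the set of all nodes that can reach $v$ in $\+G$, this is equivalent to $S\cap \RR_{\+G}(v)\neq\emptyset$. Taking probability over the randomness of $\+G$ yields
\[
\Pr{v \in \Gamma(S)} = \Pr{S\cap \RR_{\+G}(v)\neq \emptyset},
\]
which is the first claim.

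Next I would apply linearity of expectation to the indicator decomposition $|\Gamma(S)|=\sum_{v\in V}\1{v\in \Gamma(S)}$, giving
\[
\sigma(S)=\E{|\Gamma(S)|}=\sum_{v\in V}\Pr{v\in\Gamma(S)}=\sum_{v\in V}\Pr{S\cap \RR_{\+G}(v)\neq\emptyset}.
\]
Finally, for the last equality, let $v$ be drawn uniformly at random from $V$ independently of $\+G$. Then by the law of total probability,
\[
\Pr{S\cap \RR_{\+G}\neq\emptyset}=\frac{1}{|V|}\sum_{v\in V}\Pr{S\cap \RR_{\+G}(v)\neq\emptyset},
\]
and multiplying both sides by $|V|$ matches the sum above.

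There is no real obstacle here: the statement is essentially a bookkeeping result that transfers the forward-reachability view of cascading to the reverse-reachability view used by the $\RR$-set-based algorithms. The only point that deserves care is to make explicit that ``reachable from $S$ in $\+G$'' (forward) and ``$S$ intersects $\RR_{\+G}(v)$'' (reverse) refer to the same event on the same sample $\+G$, so that no independence assumption between different $v$'s is used when summing; linearity of expectation handles the sum without any such assumption.
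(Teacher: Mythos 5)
Your argument is correct and is the standard one; the paper itself states this proposition without proof (it is essentially the folklore observation from the reverse-reachable-set literature), so there is no competing approach to compare against. One point worth making explicit: your first step identifies the influenced set with the forward-reachable set \emph{realization by realization}, which is slightly stronger than the paper's literal definition of a live-edge graph diffusion model (that definition only requires equality of the expectations $\sigma(S)$); the per-node equality $\Pr{v\in\Gamma(S)}=\Pr{S\cap\RR_{\+G}(v)\neq\emptyset}$ that the proposition asserts genuinely needs this realization-level coupling, which is exactly what the paper establishes for each concrete model (e.g., in the proof of the live-edge characterization of $\SIR$), so your proof is sound provided you invoke that stronger fact rather than the bare definition.
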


In later discussion, we sometimes use $\RR_{\IC}(v)$ ($\RR_{\IC}$, resp.) to denote the reverse reachable set of a node $v$ (random reverse reachable set, resp.) corresponding to the live-edge graph $\+G_{\IC}$ for simplicity. The notations $\RR_{\SIR}(v)$, $\RR_{\TSIR}(v)$, $\RR_{\SIR}$, and $\RR_{\TSIR}$ are defined similarly.


\section{Theoretical Comparison between the $\IC$ and $\SIR$ Models}
\label{section:IC-and-SIR}

As discussed in~\Cref{sec:Intro}, we can ``aggregate'' the infection attempts in multiple rounds along each edge in the $\SIR$ model so that the $\SIR$ model and the $\IC$ model can be related by~\Cref{eqn:IC-SIR}.
However, if we view the $\SIR$ model in this way, infections across different outgoing edges of a node are correlated.
In this section, we prove that this correlation \emph{negatively} affects the cascade: given a graph $G$ and a set of seeds $S$ and setting the parameters of $\IC$ and $\SIR$ to satisfy \Cref{eqn:IC-SIR}, the influence spread of $\IC$ dominants $\SIR$. Moreover, we further show that the differences between $\IC$ and $\SIR$ lead to different seeding strategies. 

We first show the positive correlation property with respect to the occurrence of the edges in the $\SIR$ model (\Cref{sect:positive_cor}). Together with the reverse reachable set characterization of influence spread, we demonstrate that the influence spread in the $\IC$ model dominates the one in the corresponding $\SIR$ model conditioned on the comparable spreading ability on each edge by a coupling between the reverse reachable set (\Cref{thm:influence-spread-dominance} in~\Cref{subsect:IC>SIR}). Furthermore, in certain scenarios, we find that the $\IC$ model can significantly dominate the $\SIR$ model. 
We delve deeper into these scenarios to claim that the influence of $\IC$ could significantly dominate $\SIR$, which further implies different seeding strategies.

\subsection{Positive Correlation in the $\SIR$ Model}
\label{sect:positive_cor}

According to the live-edge graph formulation of the $\SIR$ model, the positive correlation can be described as follows: the probability of an edge being included in the live-edge graph decreases if other edges in the underlying graph $G$ are not included.

\begin{restatable}{lemma}{NegativeCorrelatde}\label{lem:negative-correlated}
     Given a directed graph $G=(V,E)$ with the diffusion model $\textnormal{SIR}_{\boldsymbol{\beta},\boldsymbol{\gamma}}$, we have 
    \begin{align*}
        \Pr{e\in \+G_{\SIR} \mid E'\cap \+G_{\SIR}=\emptyset}\leq \Pr{e\in \+G_{\SIR}}
    \end{align*} for any $E'\subseteq E$ and $e\in E\setminus E'$.
\end{restatable}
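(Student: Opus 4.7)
The plan is to exploit the structure of the live-edge graph for $\SIR$: edges are independent \emph{except} for the shared dependence on the recovery sequence $\boldsymbol{R}_u$ at their common source. Let $u$ denote the source of $e$, split $E' = E'_u \sqcup E'_{\neg u}$ where $E'_u$ consists of the edges in $E'$ with source $u$, and observe that the indicator $\mathbbm 1[e \in \+G_{\SIR}]$ together with all $\mathbbm 1[e' \in \+G_{\SIR}]$ for $e' \in E'_u$ are measurable with respect to $\bigl(\boldsymbol R_u, \boldsymbol I_e, \{\boldsymbol I_{e'}\}_{e' \in E'_u}\bigr)$, while $\{\mathbbm 1[e' \in \+G_{\SIR}]\}_{e' \in E'_{\neg u}}$ depend only on the disjoint family $\{\boldsymbol R_{u'}, \boldsymbol I_{e'}\}_{e' \in E'_{\neg u}}$. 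Independence of the underlying Bernoulli variables lets me drop $E'_{\neg u}$ from the conditioning, so it suffices to prove
\[\Pr{e \in \+G_{\SIR} \mid E'_u \cap \+G_{\SIR} = \emptyset} \leq \Pr{e \in \+G_{\SIR}}.\]

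Next I would condition on the recovery time $T_u := \min\{t : R_{u,t} = 1\}$, which is $\mathrm{Geom}(\gamma_u)$. Given $T_u = t$, the Bernoulli variables $\{\boldsymbol I_{e'}\}$ for distinct outgoing edges of $u$ are independent, and by the live-edge definition
\[\Pr{e \in \+G_{\SIR} \mid T_u = t} = 1 - (1-\beta_e)^t, \qquad \Pr{e' \notin \+G_{\SIR} \mid T_u = t} = (1-\beta_{e'})^t.\]
Writing $\alpha := \prod_{e' \in E'_u}(1-\beta_{e'}) \in (0,1]$, conditional independence gives $\Pr{E'_u \cap \+G_{\SIR} = \emptyset \mid T_u = t} = \alpha^t$, and a direct computation shows that the conditional law of $T_u$ given $E'_u \cap \+G_{\SIR} = \emptyset$ is again geometric, with parameter $\gamma_u' := 1 - (1-\gamma_u)\alpha \geq \gamma_u$. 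Consequently the conditional $T_u$ is stochastically dominated by the unconditional $T_u$, since $\Pr{T_u \geq k \mid E'_u \cap \+G_{\SIR} = \emptyset} = [(1-\gamma_u)\alpha]^{k-1} \leq (1-\gamma_u)^{k-1} = \Pr{T_u \geq k}$.

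Finally I would combine these observations: the map $t \mapsto 1 - (1-\beta_e)^t$ is nondecreasing in $t$, so taking expectations with respect to the stochastically smaller conditional law of $T_u$ yields a weakly smaller value than under the unconditional law, which is exactly the desired inequality. The only mildly delicate step is verifying the stochastic dominance cleanly, but this is immediate once the conditional law is identified as $\mathrm{Geom}(\gamma_u')$ with $\gamma_u' \geq \gamma_u$; everything else is bookkeeping about independence of the underlying Bernoulli families. Intuitively, seeing that several out-edges of $u$ were blocked is Bayesian evidence that $u$ recovered early, which only decreases the chance that $u$ managed to infect $v$ along $e$ before its recovery.
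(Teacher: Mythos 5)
Your proposal is correct, but it takes a genuinely different route from the paper's own proof. Both arguments start with the same reduction: by independence of the Bernoulli families attached to distinct source vertices, only the edges of $E'$ sharing the source $u$ with $e$ matter, so $E'$ may be replaced by $E'_u$ (with equality when $E'_u=\emptyset$). From there the paper proceeds by a coupling: it introduces an independent copy $\boldsymbol{R}'_u$ of the recovery sequence, rewrites the product $\Pr{e\in\+G_{\SIR}}\cdot\Pr{E'_u\cap\+G_{\SIR}=\emptyset}$ as the probability of a joint event in a product space, decomposes both sides according to the infection time of $e$ and the minimum infection time over $E'_u$, and verifies the inequality term by term via geometric-series computations. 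You instead condition on the recovery time $T_u\sim\mathrm{Geom}(\gamma_u)$, use the conditional independence of $u$'s out-edges given $T_u$, identify the posterior law of $T_u$ given $\{E'_u\cap\+G_{\SIR}=\emptyset\}$ as $\mathrm{Geom}\bigl(1-(1-\gamma_u)\alpha\bigr)$ with $\alpha=\prod_{e'\in E'_u}(1-\beta_{e'})$, and conclude by stochastic dominance combined with the monotonicity of $t\mapsto 1-(1-\beta_e)^t$. Your route is shorter and arguably more illuminating: it turns the informal intuition (blocked out-edges are evidence that $u$ recovered early) into the precise statement that the posterior recovery time is stochastically smaller, and as a byproduct it yields exactly the closed-form conditional probabilities $\Pr{(u,u_j)\in\+G_{\SIR}\mid E'_u\cap\+G_{\SIR}=\emptyset}$ that the paper later needs in its $\RR_{\SIR}$ sampling subroutine; the paper's version has the advantage of matching the coupling machinery used elsewhere in Section 4. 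One pedantic caveat, shared by both proofs: if some $\beta_{e'}=1$ for $e'\in E'_u$ then $\alpha=0$ (so it is not in $(0,1]$ as you wrote) and the conditioning event is null, so this degenerate case should be excluded or handled by convention.
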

A tiny example is helpful to understand \Cref{lem:negative-correlated}. Given $G=(V,E)$ with $V=\{ u,v,w \}$, $E=\{(u,v),(u,w)\}$, and letting $e = (u,v)$ and $E'=\{(u,w)\}$, 
\Cref{lem:negative-correlated} says that the event $(u,w)\notin\+G_{\SIR}$ makes the edge $(u,v)$ less likely to be live.
The proof of this lemma is deferred to~\Cref{sec:appendix}. 
Intuitively, in the above example, if knowing $(u,w)$ is not live, $u$ is more likely to recover at earlier rounds, which decreases the chance that $(u,v)$ is live.
In general, knowing a set of edges fails to be live makes an edge less likely to be live.

\begin{remark}
    The dependency exists only on the outgoing edges from the same vertex.
    It is obvious from the definition of $\SIR$ that the events $e_1\in\+G_\SIR$ and $e_2\in\+G_\SIR$ are independent if $e_1$ and $e_2$ are not outgoing edges of the same vertex.
\end{remark}

To show that $\IC$ dominates $\SIR$ given the same seed set and with parameters satisfying \Cref{eqn:IC-SIR}, a natural idea is to couple the two spreading processes such that the set of infected vertices in the $\SIR$ process is a subset of the set of infected vertices in the $\IC$ process.
However, in the following example, we demonstrate that such a coupling is unlikely to exist.

Consider a seed set with a single seed $s$ that has many outgoing neighbors. 
In the $\IC$ model, the events that $s$ successfully infects its neighbors are independent.
In the $\SIR$ model, due to \Cref{eqn:IC-SIR}, $s$ infects each neighbor with the same probability as in the $\IC$ model.
Thus, the expected number of infected neighbors is the same in both models due to the linearity of expectation.
Intuitively, due to the positive correlation, in the $\SIR$ model, the number of infected neighbors is more likely to be either very small or very large.
However, because of the same expectation, it is impossible to find a coupling such that the set of infected neighbors in $\SIR$ is a subset of the set of infected neighbors in $\IC$.
In fact, from this example, it is not even intuitively clear that $\IC$ is superior to $\SIR$.

In the next section, we will use a coupling on sampling the reverse-reachable sets of an arbitrary vertex $v$.
We will show that viewing the cascade in such a ``backward'' way helps us better understand the relationship between the two models.

\subsection{$\IC$ Dominating $\SIR$}
\label{subsect:IC>SIR}
 


Given a directed graph $G=(V,E)$, we rigorously prove that the influence spread of any seed set in the $\IC$ model dominates that of the $\SIR$ model, under the assumption that the parameters satisfy \Cref{eqn:IC-SIR}. 
\begin{theorem}\label{thm:influence-spread-dominance}
    Given any directed graph $G=(V,E)$ together with the diffusion models $\IC_{\boldsymbol{p}}$ and $\SIR_{\boldsymbol{\beta,\gamma}}$ where the parameters satisfy \Cref{eqn:IC-SIR}, we have for any set $S\subseteq V$,
    \begin{align*}
        \sigma_{{\IC}}(S)\geq  \sigma_{{\SIR}}(S).
    \end{align*}
\end{theorem}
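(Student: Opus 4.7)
The plan is to combine the reverse-reachable-set characterization of influence spread (\Cref{lem:reverse-reachable-set-characterization}) with a hybrid argument that swaps the $\SIR$ edge distribution for the $\IC$ one, one source vertex at a time. By \Cref{lem:reverse-reachable-set-characterization} it suffices to prove, for every fixed $v\in V$, that $\Pr{v\text{ is reachable from } S \text{ in } \+G_{\IC}}\geq \Pr{v\text{ is reachable from } S \text{ in } \+G_{\SIR}}$. Order $V=\set{u_1,\dots,u_n}$ and define hybrid live-edge graphs $\+G^{(0)},\+G^{(1)},\dots,\+G^{(n)}$ on the edge set $E$, where under $\+G^{(i)}$ the outgoing edges of $u_j$ are sampled as in $\IC$ (independently, with probabilities $p_{u_j,\cdot}$) for $j\leq i$ and as in $\SIR$ (jointly, via the recovery-plus-infection mechanism) for $j>i$. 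Because outgoing edges from distinct source vertices are independent in both models, each $\+G^{(i)}$ is a well-defined product measure across source vertices, with $\+G^{(0)}$ distributed as $\+G_{\SIR}$ and $\+G^{(n)}$ distributed as $\+G_{\IC}$. It therefore suffices to show the one-step inequality $\Pr{v\text{ reachable from }S\text{ in }\+G^{(i)}}\leq \Pr{v\text{ reachable from }S\text{ in }\+G^{(i+1)}}$ for each $i$, and telescope.

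For the one-step inequality, I condition on the realization of every edge not outgoing from $u_{i+1}$; both hybrids induce the same distribution over these edges. Let $F$ be the set of live such edges and let $R_F(W)$ denote the vertices reachable from $W\subseteq V$ in the graph $(V,F)$. Write $R=R_F(S)$, and let $(X_1,\dots,X_k)$ be the indicators of the outgoing edges $(u_{i+1},v_1),\dots,(u_{i+1},v_k)$; by \Cref{eqn:IC-SIR} their marginals are $\Pr{X_j=1}=p_{u_{i+1},v_j}$ under both hybrids. The key structural claim is that, given $F$, the reachability of $v$ is either constant in $(X_1,\dots,X_k)$ or equals a monotone OR function $\bigvee_{j\in I}X_j$ for some $I\subseteq\set{1,\dots,k}$. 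If $u_{i+1}\notin R$, then $u_{i+1}$ is never reached and its outgoing edges are irrelevant, so reachability equals the constant $\1{v\in R}$; if $v\in R$, reachability is the constant $1$. Otherwise, since $u_{i+1}\in R$, activating the outgoing edges of $u_{i+1}$ has the same effect as adjoining $\set{v_j:X_j=1}$ to the seed set, making the full reachable set equal to $R_F\tp{S\cup\set{v_j:X_j=1}}=R\cup\bigcup_{j:X_j=1}R_F(v_j)$; this contains $v$ iff $v\in R_F(v_j)$ for some $j$ with $X_j=1$, so reachability equals $\bigvee_{j\in I}X_j$ with $I=\set{j:v\in R_F(v_j)}$.

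It remains to compare $\E{\bigvee_{j\in I}X_j}$ under the two hybrids. Since the marginals of the $X_j$ agree, this is equivalent to comparing $\Pr{\forall j\in I:X_j=0}$. Under the $\IC$ side (in $\+G^{(i+1)}$), independence gives exactly $\prod_{j\in I}(1-p_{u_{i+1},v_j})$. Under the $\SIR$ side (in $\+G^{(i)}$), iteratively applying \Cref{lem:negative-correlated} (take $e=(u_{i+1},v_{j_t})$ and $E'$ a growing prefix of $\set{(u_{i+1},v_j):j\in I}$) yields
\[\Pr{\forall j\in I:X_j=0}\;\geq\;\prod_{j\in I}(1-p_{u_{i+1},v_j}),\]
so $\E{\bigvee_{j\in I}X_j}$ is weakly smaller under $\SIR$ than under $\IC$. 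Averaging over $F$ gives the one-step inequality, and telescoping finishes the proof. The main obstacle I anticipate is the structural OR-claim: directed cycles through $u_{i+1}$ make reachability look a priori complicated, but the crucial observation is that once $u_{i+1}\in R$, revisiting $u_{i+1}$ via a live outgoing edge and a returning $F$-path contributes nothing new, so the live outgoing edges act simply as additional seeds.
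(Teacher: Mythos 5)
Your proof is correct, but it takes a genuinely different route from the paper's. The paper proves \Cref{lem:dominance-coupling} by constructing an explicit coupling of the two live-edge graphs (\Cref{algo:coupling}): it runs a reverse BFS from $v$, reveals the outgoing edges of each frontier vertex one at a time using shared uniform randomness, stops at the first live edge on the $\SIR$ side, and then verifies separately that the coupling has the correct marginals (\Cref{lem: correctdistribution}) and maintains the invariant $\RR_2\subseteq\RR_1$ (\Cref{lem:RRset-domainance}). You instead use a hybrid argument: interpolate between $\+G_\SIR$ and $\+G_\IC$ by switching one source vertex at a time (legitimate because outgoing-edge sets of distinct vertices are independent in both models, as the paper notes in its Remark after \Cref{lem:negative-correlated}), condition on all other edges, observe that reachability of $v$ is then either constant or a monotone OR $\bigvee_{j\in I}X_j$ of the outgoing-edge indicators, and compare $\Pr{\forall j\in I: X_j=0}$ directly via iterated application of \Cref{lem:negative-correlated}. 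Your structural OR-claim is sound (a simple path visits $u_{i+1}$ at most once, so at most one resampled edge is ever used, and it is used immediately after the unique visit), and the prefix-conditioning in your product expansion is exactly the form \Cref{lem:negative-correlated} supplies. Both arguments ultimately rest on the same two facts --- matching marginals from \Cref{eqn:IC-SIR} and the negative prefix-correlation --- and on the same insight that only \emph{one} live outgoing edge matters; but your version avoids defining and validating a joint coupling altogether, reducing the theorem to a clean pointwise comparison of two distributions on $\{0,1\}^k$ with equal marginals evaluated on a monotone disjunction. What you lose relative to the paper is the constructive by-product: the paper's coupling procedure is reused almost verbatim as the $\RR_\SIR$ sampling subroutine (\Cref{algo:SIRRRset}) underlying its algorithmic results, whereas your hybrid argument is purely existential.
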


According to the linearity of the expectation and the reverse reachable set characterization of the influence spread~\Cref{lem:reverse-reachable-set-characterization}, it suffices to show the following lemma.
\begin{lemma}\label{lem:dominance-coupling}
    Given any directed graph $G=(V,E)$ together with the diffusion models $\IC_{\boldsymbol{p}}$ and $\SIR_{\boldsymbol{\beta,\gamma}}$, where the parameters satisfy \Cref{eqn:IC-SIR}, we have for any seed set $S\subseteq V$ and any node $v\in V$,
    \begin{align*}
        \Pr{S\cap \RR_{\IC}(v)\neq \emptyset}\geq  \Pr{S\cap \RR_{\SIR}(v)\neq \emptyset}.
    \end{align*}
\end{lemma}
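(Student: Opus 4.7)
The plan is to interpolate between $\SIR$ and $\IC$ one source vertex at a time and show that each step weakly increases the reachability probability. Fix an enumeration $V=\{w_1,\dots,w_n\}$ and, for $i=0,1,\dots,n$, define a live-edge model $M_i$ in which the outgoing edges of $w_j$ are sampled as mutually independent Bernoullis (the $\IC$ rule) for $j\le i$, and via the variables $\boldsymbol R_{w_j}$ and $\boldsymbol I_{(w_j,\cdot)}$ (the $\SIR$ rule) for $j>i$, with the outgoing-edge blocks at distinct source vertices mutually independent. By \Cref{eqn:IC-SIR} every edge $(w,u)$ has the same marginal $p_{w,u}$ under each $M_i$, with $M_0=\SIR_{\boldsymbol\beta,\boldsymbol\gamma}$ and $M_n=\IC_{\boldsymbol p}$. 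It therefore suffices to prove
\[\Pr[M_i]{S\cap\RR(v)\ne\emptyset}\ \ge\ \Pr[M_{i-1}]{S\cap\RR(v)\ne\emptyset}\qquad\text{for each }i\in[n]\]
and telescope.

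To compare $M_{i-1}$ with $M_i$, couple them so that $\+G_{-w_i}$---the portion of the live-edge graph on edges \emph{not} outgoing from $w_i$---is sampled identically; this is legitimate because the outgoing-edge blocks at different source vertices are independent in both models. Condition on $\+G_{-w_i}$ and case-split: (i) if $v$ is reachable from $S$ in $\+G_{-w_i}$, both conditional probabilities equal $1$; (ii) if neither $v$ nor $w_i$ is reachable from $S$ in $\+G_{-w_i}$, no addition of out-edges of $w_i$ can help (any simple $S$-to-$v$ path using such an edge must first arrive at $w_i$), so both conditional probabilities equal $0$; (iii) if $w_i$ is reachable from $S$ in $\+G_{-w_i}$ but $v$ is not, then any simple $S$-to-$v$ path in $\+G$ visits $w_i$ exactly once and hence uses exactly one out-edge $(w_i,u)$, completed by $\+G_{-w_i}$-segments before and after, so the conditional event reduces to the monotone ``or''
\[\bigcup_{u\in U}\bigl\{(w_i,u)\in\+G\bigr\},\qquad U:=\{u:(w_i,u)\in E,\ u\text{ reaches }v\text{ in }\+G_{-w_i}\}.\]

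In case (iii), under $M_i$ the events $\{(w_i,u)\in\+G\}_{u\in U}$ are mutually independent with parameters $p_{w_i,u}$, giving conditional probability $1-\prod_{u\in U}(1-p_{w_i,u})$. Under $M_{i-1}$ these events follow the $\SIR$ joint distribution; iteratively applying \Cref{lem:negative-correlated} with $E'=\{(w_i,u_1),\dots,(w_i,u_{j-1})\}$ and $e=(w_i,u_j)$ yields
\[\Pr[M_{i-1}]{\forall u\in U:\ (w_i,u)\notin\+G \mid \+G_{-w_i}}\ \ge\ \prod_{u\in U}(1-p_{w_i,u}),\]
so the corresponding ``or'' probability under $M_{i-1}$ is at most $1-\prod_{u\in U}(1-p_{w_i,u})$. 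Combining cases (i)--(iii) gives $\Pr[M_i]{S\cap\RR(v)\ne\emptyset\mid\+G_{-w_i}}\ge\Pr[M_{i-1}]{S\cap\RR(v)\ne\emptyset\mid\+G_{-w_i}}$ almost surely, and taking expectations finishes the inductive step.

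\textbf{Expected obstacle.} The one point that demands care is the structural claim in case (iii): after conditioning on $\+G_{-w_i}$ and on the stated reachability pattern, $\{S\cap \RR(v)\ne\emptyset\}$ really does collapse to the monotone ``or'' over $U$. This rests on the simple-path observation that a path can visit $w_i$ at most once and so uses at most one out-edge of $w_i$; the boundary configurations ($v=w_i$, $v\in S$, $w_i\in S$, self-loops $(w_i,w_i)$) all reduce to case (i) or (ii)---e.g.\ if $v=w_i$ then $A$ and $B$ coincide so case (iii) is vacuous, and the self-loop $(w_i,w_i)$ cannot belong to $U$ since in $\+G_{-w_i}$ the vertex $w_i$ reaches only itself---so none of them obstruct the argument.
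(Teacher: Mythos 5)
Your proof is correct, and it reaches the lemma by a genuinely different architecture than the paper's. The paper builds one global coupling of the two live-edge graphs (\Cref{algo:coupling}): a reverse BFS from $v$ reveals the out-edges of each frontier vertex one at a time with a shared uniform variate, and since the $\SIR$ inclusion threshold $\Pr{(u,u_j)\in \+G_{\SIR}\mid E'_u\cap \+G_{\SIR}=\emptyset}$ is at most the $\IC$ threshold $p_{u,u_j}$ by \Cref{lem:negative-correlated}, the $\SIR$ in-arborescence stays inside the revealed $\IC$ subgraph, giving the pathwise containment $\RR_{\SIR}(v)\subseteq\RR_{\IC}(v)$ (after \Cref{lem: correctdistribution} verifies the coupled process has the right marginals). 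You instead interpolate through hybrid models $M_0=\SIR,\dots,M_n=\IC$, switching one source vertex at a time, condition on the edges not outgoing from $w_i$ (legitimate, since by the remark after \Cref{lem:negative-correlated} the out-edge blocks at distinct sources are independent), and reduce the comparison to a single monotone ``at least one live edge into $U$'' event, bounded via the chain rule and \Cref{lem:negative-correlated}. Both arguments rest on the same two facts---the negative correlation and the observation that only one live out-edge per vertex matters for reachability---but yours trades the paper's marginal-correctness check for the structural claim that, in your case (iii), reachability collapses to a monotone disjunction over $U$; your simple-path justification of that claim is sound (the suffix after the unique visit to $w_i$ lies in $\+G_{-w_i}$), and the boundary configurations are handled correctly. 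What each buys: your hybrid route shows the dominance holds for every intermediate model in which only some vertices carry the $\SIR$ correlation, while the paper's pathwise coupling is reused directly to justify the sampling subroutine in \Cref{algo:SIRRRset}.
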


The lemma above straightforwardly implies~\Cref{thm:influence-spread-dominance}.

\begin{proof}[Proof of \Cref{thm:influence-spread-dominance}]
    According to \Cref{lem:reverse-reachable-set-characterization}, we have
    \begin{align*}
        \sigma_{\IC}(S)&=\sum\nolimits_{v\in V}  \Pr{S\cap \RR_{\IC}(v)\neq \emptyset}\geq \sum\nolimits_{v\in V} \Pr{S\cap \RR_{\SIR}(v)\neq \emptyset}= \sigma_{\SIR}(S)
    \end{align*} where the inequality follows from \Cref{lem:dominance-coupling}.
\end{proof}

It now remains to prove \Cref{lem:dominance-coupling}.

\subsubsection{Some Intuitions for the Correctness of \Cref{lem:dominance-coupling}.}
To prove~\Cref{lem:dominance-coupling}, considering an arbitrary fixed $v$, we define a coupling between $\+G_\IC$ and $\+G_\SIR$ such that $\RR_{\SIR}(v)\subseteq\RR_{\IC}(v)$. The existence of such a coupling immediately implies the lemma. It then remains to define such a coupling.
We first describe the high-level ideas as follows.

The edges in both $\+G_\IC$ and $\+G_\SIR$ are revealed on a need-to-know basis.  
Specifically, the edges are revealed in a reverse Breadth-First-Search process: 
let $U$ denote the set of vertices that can reach vertex $v$, initialized as $U=\set{v}$; in each iteration, for every vertex $x$ that has out-neighbors in $U$, we reveal the corresponding outgoing edges of $x$; if one of these outgoing edges is live, then $x$ is included in $U$.
{In addition, we can maintain an in-arborescence (a directed tree rooted at $v$ where each edge is from the child to the parent) containing the reverse reachable nodes of $v$}: if the node $x$ has at least one outgoing edge connecting to $U$ that is live, we pick an arbitrary such live edge and include it in the in-arborescence.
Note that, after this process, unrevealed edges have no effect on $\RR_\IC(v)$ or $\RR_\SIR(v)$.
We couple the two reverse Breadth-First-Search processes for $\RR_\IC(v)$ and $\RR_\SIR(v)$ respectively such that the in-arborescence for the former is a superset of the in-arborescence for the latter.
The fact that each node in an in-arborescence has an out-degree at most $1$ ensures that this is always possible.
This is better illustrated by the following example.

Suppose at a certain stage of the Breadth-First-Search process where, in both $\IC$ and $\SIR$ processes, $U$ is the set of vertices that are already in the in-arborescence, and suppose now we are revealing the outgoing edges of a vertex $x\notin U$ to see if $x$ is in the in-arborescence.
Suppose, for example, $y_1,y_2,y_3\in U$ are all the outgoing neighbors of $x$ that belong to $U$.
If at least one of $(x,y_1)$, $(x,y_2)$, and $(x,y_3)$ is live in the $\SIR$ process, then $x$ is included in $U$ in the next iteration.
Our coupling ensures that $x$ is also included in $U$ in the $\IC$ process.
To see this, taking an example where $(x,y_2),(x,y_3)\in\+G_\SIR$ and $(x,y_1)\notin\+G_\SIR$, which happens with probability 
$\Pr{(x,y_1)\notin\+G_\SIR}\cdot\Pr{(x,y_2)\in\+G_\SIR\mid (x,y_1)\notin\+G_\SIR}\cdot\Pr{(x,y_3)\in\+G_\SIR\mid (x,y_1)\notin\+G_\SIR,(x,y_2)\in\+G_\SIR}$.
The live edge with the smallest index, which is $(x,y_2)$ in this case, is included in the in-arborescence, and $x$ is included in $U$.
We consider these three probabilities in the $\IC$ model.
For the first, we have $\Pr{(x,y_1)\in\+G_\SIR}=\Pr{(x,y_1)\in\+G_\IC}$ due to \Cref{eqn:IC-SIR}.
For the second, we have
\begin{align*}
    \Pr{(x,y_2)\in\+G_\SIR\mid (x,y_1)\notin\+G_\SIR}&\leq\Pr{(x,y_2)\in\+G_\SIR}\tag{\Cref{lem:negative-correlated}}\\
    &=\Pr{(x,y_2)\in\+G_\IC}\tag{\Cref{eqn:IC-SIR}}\\
    &=\Pr{(x,y_2)\in\+G_\IC\mid (x,y_1)\notin\+G_\IC}.\tag{Independence in $\IC$}
\end{align*}
By coupling the events for the first two edges $(x,y_1)$ and $(x,y_2)$, we already ensure that $x$ is included in $U$ in the $\IC$ process as well.
Notice that, for the third probability, although the relationship between $$\Pr{(x,y_3)\in\+G_\SIR\mid (x,y_1)\notin\+G_\SIR,(x,y_2)\in\+G_\SIR}$$ and $$\Pr{(x,y_3)\in\+G_\IC\mid (x,y_1)\notin\+G_\IC,(x,y_2)\in\+G_\IC}$$ cannot be implied by \Cref{lem:negative-correlated}, we have already included $x$ in $U$ regardless of the status of $(x,y_3)$.
Even if the conditional probability that $(x,y_3)$ is live for $\SIR$ is higher than it is for $\IC$, this does not give $\SIR$ any advantages.

More generally, if $x$ has many out-neighbors $y_1,y_2,\ldots,y_k$ in $U$ and $i$ is the smallest index such that $(x,y_i)$ is live in the $\SIR$ process, by \Cref{lem:negative-correlated}, the event that $(x,y_1),\ldots,(x,y_{i-1})$ are blocked reduces the chance that $(x,y_i)$ is live in the $\SIR$ process, while this event has no effect on the chance that $(x,y_i)$ is live in the $\IC$ process.
By coupling the event corresponding to the status of the first $i$ edges, we can make $x$ be included in $U$ for the $\IC$ process whenever this happens in the $\SIR$ process.
Although $(x,y_i)$ being live may increase the chances of $(x,y_{i+1}),\ldots,(x,y_k)$ being live in the $\SIR$ process, this does not give $\SIR$ any advantages against $\IC$ as $x$ is already included in $U$ due to the inclusion of $(x,y_i)$ in the in-arborescence.
The fact that we only need one live edge from $x$ to $U$ and the correlation property described in \Cref{lem:negative-correlated} make $\IC$ superior to $\SIR$.

Notice that the main purpose of the example above is to give readers some intuitions.
A rigorous proof follows next.

\subsubsection{The Proof of~\Cref{lem:dominance-coupling}.}

Given any node $v\in V$ with the live-edge graph $\+G$, we can determine the set $\RR_{\+G}(v)$ in a reverse Breath-First-Search way, and the edges are revealed on a need-to-know basis. Specifically, let $U$ denote the set of vertices that can reach vertex $v$, initialized as $U=\set{v}$ and $A\subseteq V\setminus U$ be the set of nodes that has out-neighbors in $U$. We repeat the following operations until there are no more {nodes that can be added into $U$,} i.e., $A=\emptyset$: select an arbitrary node $u\in A$, and reveal its outgoing edges to $U$ in an arbitrary fixed order. Once there exists an outgoing edge occurs in $\+G$, we update $U\gets U\cup \set{u}$ and skip the remaining unrevealed edges as these edges will not affect the reverse reachable set.
One can easily verify that $U=\RR_{\+G}(v)$ upon the termination of the process.

To show that $\IC$ dominates $\SIR$, we design a coupling between the reverse reachable set realization processes of $\IC$ and $\SIR$. As mentioned in~\Cref{subsect:IC>SIR}, the coupling relies on the in-arborescence structure in the SIR process. This is because the probability of an outgoing edge occurring in $\+G_{\SIR}$ is less than that in $\+G_{\IC}$ given that it is the first live edge revealed (according to~\Cref{lem:negative-correlated}).

Based on the need-to-know principle, we maintain an in-arborescence in the SIR process and a live-edge sub-graph in $\IC$ such that the former is the subset of the latter, which implies that the reverse reachable set of $\IC$ is a superset of the corresponding set in $\SIR$ at each step.
Specifically, let $E_1$ and $E_2$ be the random live edges in ${\IC}$ and ${\SIR}$ that have been revealed at each step. We also use $G_1=(V,E_1)$ and $G_2=(V,E_2)$ to denote the sub-graphs that consist of edges and vertices related to $E_1$ and $E_2$, respectively. Moreover, the set $\RR_{1}$ and $\RR_{2}$ correspond to the reverse reachable nodes of $v$ in $G_1=(V,E_1)$ and $G_2=(V,E_2)$, respectively. As one will see, $G_2$ forms an in-arborescence, and $G_2$ is always a subgraph of $G_1$, which implies that $\RR_{2}\subseteq \RR_{1}$ at each step.  

Let $A$ be the collection of nodes that have unrevealed outgoing edges pointing to $\RR_2$. At each round, we select a node $u\in A$ and reveal the unrevealed edges from $u$ to $\RR_2$.
We use $E'_u$ to denote the set of outgoing edges of the node $u$ that have been revealed in previous rounds, and  $\bigl((u,u_1),\ldots, (u,u_\ell) \bigr)$ to denote the unrevealed outgoing edges of $u$ with endpoints in $\RR_{2}$ listed by the index order. We reveal the edges in $\bigl((u,u_1),\ldots, (u,u_\ell) \bigr)$ in order and stop revealing more edges once an edge is live. During the coupling process, {for each unsampled edge $e=(u,u_j)$ where $j\in[\ell]$, we operates as follows:
\begin{enumerate}
    \item Sampling $p\sim\!{Unif}[0,1]$;
    \item For $\IC$, if $p\leq p_{u,u_j}$ then make $e$ live in the live-edge graph $\+G_{\IC}$, $\RR_{1}\gets \RR_{1}\cup \set{u}$, $E_1\gets E_1\cup \set{(u,u_j)}$;
    \item For $\SIR$, if $p\leq \Pr{(u,u_j)\in \+G_{\SIR}\mid E'_u\cap \+G_{\SIR}=\emptyset}$ then make $e$ live in the live-edge graph $\+G_{\SIR}$, $\RR_{2}\gets \RR_{2}\cup \set{u}$, $E_2\gets E_2\cup \set{(u,u_j)}$, skip the examination of the remaining edges;
    \item Setting $E'_u\gets E'_u\cup\set{e} $.
\end{enumerate}
Once there are no more nodes that have out-neighbors in $\RR_2$, we reveal the whole live-edge graph on the unrevealed edge. Specifically, let $\+G_1$ and $\+G_2$ be the random live-edge graph by revealing the edges in $E\setminus \left(\bigcup_{u\in V} E'_{u}\right)$ according to the correct marginal distributions, respectively. The formal description of our coupling is presented in~\Cref{algo:coupling}.} Note that the variable $\!{flag}$ in~\Cref{algo:coupling} is the indicator of whether there is an outgoing edge that is live.

\begin{algorithm}[h]
		\caption{The reverse recursive coupling $\+C_v$}
		\label{algo:coupling}
		\KwIn{A directed graph $G=(V,E)$ with the parameters $\boldsymbol{p}$ and $\boldsymbol{\beta,\gamma}$, and a node $v\in V$.}
		\KwOut{A pair of live-edge graph $\+G_1,\+G_2$.} 
        Label an arbitrary order on the nodes and edges in $G$\;
        $E_1=E_2\gets \set{\emptyset}$ and $\RR_{1}=\RR_{2}\gets \set{v}$\;
	$E'_u \gets \emptyset$ for any $u\in V$\;
                \While{$A\gets \set{u\in V\setminus \RR_{2}\mid \exists u'\in \RR_{2}, (u,u')\in E\setminus E'_u }$ and $A\neq \emptyset$}
                {
                    Select the node $u\in A$ with the smallest index\;
                    Let $\bigl((u,u_1),\ldots, (u,u_\ell) \bigr)$ be the directed edges not in $E'_u$  with endpoints in $\RR_{2}$ listed by the index order\;
                    $j\gets 0$, $\!{flag}\gets 0$\;
                    \Repeat{$\!{flag}=1$ or $j=\ell$}
                    {
                        $j\gets j+1$\;
                        Sample $p\sim\!{Unif}[0,1]$\;
                        $\RR_{1}\gets \RR_{1}\cup \set{u}$, $E_1\gets E_1\cup \set{(u,u_j)}$ when $p\leq p_{u,u_j}$\;
                   
                        $\RR_{2}\gets \RR_{2}\cup \set{u}$, $E_2\gets E_2\cup \set{(u,u_j)}$ and $\!{flag}\gets 1$ when $p\leq \Pr{(u,u_j)\in \+G_{\SIR}\mid E'_u\cap \+G_{\SIR}=\emptyset}$\;
                        $E'_u\gets E'_u\cup\set{(u,u_j)} $\;
                    }
                }
        Let $\+G_1$ and $\+G_2$ be the random live-edge graph by revealing the edges in $E\setminus \left(\bigcup_{u\in V} E'_{u}\right)$ according to the correct marginal distributions, respectively\;
        \Return{$\left(\+G_{1},\+G_{2}\right)$.}
\end{algorithm}


The realization of the edges follows from the correct marginal distribution, which ensures the correctness of our live-edge graphs.
\begin{restatable}{lemma}{correctdistribution}\label{lem: correctdistribution}
    In~\Cref{algo:coupling}, $\+G_{1}$ and $\+G_{2}$ follow the distribution of $\+G_{\IC}$ and $\+G_{\SIR}$, respectively.
\end{restatable}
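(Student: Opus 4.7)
The plan is to verify that the two live-edge graphs output by \Cref{algo:coupling} follow their intended marginal distributions when considered separately. I organize the argument around two observations: edges outgoing from distinct vertices are mutually independent in both models, so I can fix a single tail $u$ and argue vertex-by-vertex; and for a fixed tail $u$, the joint law over the outgoing edges of $u$ produced by the algorithm matches the joint law in the corresponding model.

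For $\+G_1$, the law of $\+G_\IC$ is a product of independent $\!{Bern}(p_e)$ over $e\in E$. In \Cref{algo:coupling}, each inspected edge $(u,u_j)$ is placed into $\+G_1$ iff a fresh $p\sim\!{Unif}[0,1]$ satisfies $p\leq p_{u,u_j}$, and the remaining edges are sampled in the final step from their correct marginals, independently. Since the uniform draws are fresh and independent across edges, $\+G_1$ is a product of independent $\!{Bern}(p_e)$ variables, matching $\+G_\IC$.

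For $\+G_2$, the remark following \Cref{lem:negative-correlated} tells us that outgoing edges of distinct vertices are independent in $\SIR$, so it suffices to fix $u$ and argue about the restriction of $\+G_2$ to outgoing edges of $u$. In the algorithm, the edges at $u$ considered during the inner loop are examined in index order $(u,u_1),\ldots,(u,u_\ell)$. Let $J$ denote the index at which the loop halts. By construction, at the moment $(u,u_j)$ is examined, $E'_u = \set{(u,u_1),\ldots,(u,u_{j-1})}$ and all these edges were previously declared blocked in $\+G_2$; the algorithm then places $(u,u_j)$ into $\+G_2$ with probability
\[
\Pr{(u,u_j)\in \+G_\SIR \mid (u,u_1),\ldots,(u,u_{j-1})\notin \+G_\SIR}.
\]
Telescoping via the chain rule shows that the joint law of $(\1{(u,u_1)\in\+G_2},\ldots,\1{(u,u_J)\in\+G_2})$ agrees with that of $\+G_\SIR$ restricted to this prefix. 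The outgoing edges of $u$ not examined in the inner loop (those skipped after the first live edge, together with those to vertices never reached by the BFS) are then resampled in the final step from the correct conditional distribution given the revealed prefix, extending to a valid sample from $\+G_\SIR$ at $u$.

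The main point requiring care is that $\+G_1$ and $\+G_2$ are constructed via a shared uniform $p$, and I must confirm that this coupling does not corrupt either marginal. Since each is obtained by thresholding the same $p$ against a value depending only on prior randomness, and $p$ itself is independent of that prior randomness, the marginal of $\+G_1$ is unaffected by the presence of $\+G_2$ and vice versa. Beyond this, the chain-rule identity for the $\SIR$-side prefix and the inter-vertex independence reduce essentially to bookkeeping, so I do not anticipate further obstacles.
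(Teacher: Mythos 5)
Your proposal is correct and follows essentially the same route as the paper: both arguments reduce to the observation that the threshold used for each examined edge is exactly the $\SIR$ conditional probability given the revealed blocked prefix at that vertex, so the chain rule (your per-vertex telescoping; the paper's structural induction on the pair $(E_2,E')$ via the event $\mathcal{E}$) recovers the correct joint law, with the $\IC$ marginal being immediate from independence. The only cosmetic difference is that you make explicit the inter-vertex independence and the harmlessness of the shared uniform, which the paper leaves implicit.
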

\begin{proof}
    It is obvious that $\+G_{1}$ follows the distribution of $\+G_{\IC}$ since each edge $e\in E$ is included in $\+G_{\IC}$ independently. We next show the same fact holds for $\+G_{2}$. Let $E'=\bigcup_{u\in V} E_u'$. It suffices to show that our coupling process maintains $(E_2,E')$ with probability $\Pr{\mathcal{E}}$ where
    $\mathcal{E}$ denote the event that $E_{2}\subseteq \+G_{\SIR}$ and $(E'\setminus E_{2})\cap\+G_{\SIR}=\emptyset$.
    We prove it by a structural induction with the satisfying induction basis $E_2=E'=\emptyset$. For the induction step, we assume that $e=(u,u')$ is the next edge to be sampled given the current $E_2$ and $E'$. According to the coupling procedure, $e$ is included with probability $ \Pr{e\in \+G_{\SIR}\mid E'_u\cap \+G_{\SIR}=\emptyset}$. By definition, we have 
    \[
        \Pr{e\in \+G_{\SIR}\mid E'_u\cap \+G_{\SIR}=\emptyset}=  \Pr{e\in \+G_{\SIR} \mid \mathcal{E}}.
    \]
    Combining with the induction hypothesis that $(E_2,E')$ are obtained with $\Pr{\mathcal{E}}$, we add $e$ to $E_{2}$ and remove $e$ from $E'$ with probability
    \begin{align*}
        &\quad\mathbf{Pr}[e\in \+G_{\SIR} \,\vert\, \mathcal{E}]\cdot \mathbf{Pr}[\mathcal{E}]
        =\mathbf{Pr}[(E_{2}\! \cup  \set{e})\subseteq \+G_{\SIR},(E'\!\setminus E_{2})\cap\+G_{\SIR}=\emptyset].
    \end{align*} The case when $e$ is not included in the live-edge graph is similar, and the proof is complete.
\end{proof}

By sharing the randomness on the realization of each edge in $\IC$ and $\SIR$, we couple the propagation of the influence and maintain that $ \RR_{1}\supseteq \RR_{2}$ by the positive correlation property in the $\SIR$ model. 
\begin{restatable}{lemma}{RRsetdomainance}\label{lem:RRset-domainance}
    Given any directed graph $G=(V,E)$ together with the diffusion models $\IC_{\boldsymbol{p}}$ and $\SIR_{\boldsymbol{\beta,\gamma}}$, we have that 
    \begin{align*}
    \forall\, v\in V, \quad
        \RR_{\+G_1}(v)\supseteq \RR_{1}\supseteq \RR_{2}=\RR_{\+G_2}(v)
    \end{align*}
     when $\Pr{e\in \+G_{\IC}}=\Pr{e\in \+G_{\SIR}}$ for any $e\in E$.
\end{restatable}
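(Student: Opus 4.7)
The plan is to verify the three assertions in the chain separately, reading the algorithm as a synchronized coupling that uses a common uniform variable $p$ to decide inclusion for both $\IC$ and $\SIR$ along each inspected edge. First I would argue the easy containment $\RR_{\+G_1}(v)\supseteq \RR_{1}$: the set $E_1$ is built exclusively from edges that were declared live for the $\IC$ process during the coupling, so $E_1\subseteq \+G_1$, and therefore any node that can reach $v$ through $E_1$-edges can reach $v$ through $\+G_1$-edges as well. For the equality $\RR_{2}=\RR_{\+G_2}(v)$, I would exploit the termination condition $A=\emptyset$ of the outer \textbf{while} loop. At termination, for every $u\notin\RR_2$ and every $u'\in\RR_2$, the edge $(u,u')$ (if present in $E$) must already lie in $E_u'$, and since $u$ was not added to $\RR_2$ none of these revealed edges was live in the $\SIR$ process. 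Hence the leftover edges revealed in the final step of the algorithm contain no edge from $V\setminus\RR_2$ into $\RR_2$, and such edges cannot change the reverse-reachable set of $v$.

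The substantive part is the middle containment $\RR_{1}\supseteq \RR_{2}$, which I would prove by induction on the step of the inner \textbf{Repeat} loop, maintaining the stronger invariant that $E_2\subseteq E_1$ and $\RR_2\subseteq \RR_1$. The base case is immediate since both start as $\{v\}$. For the inductive step, consider the moment the algorithm samples $p\sim\!{Unif}[0,1]$ for an edge $(u,u_j)$. Combining the marginal-probability equivalence with Lemma~\ref{lem:negative-correlated} yields
\begin{align*}
\Pr{(u,u_j)\in \+G_{\SIR}\mid E'_u\cap \+G_{\SIR}=\emptyset}
&\leq \Pr{(u,u_j)\in \+G_{\SIR}} \\
&=\Pr{(u,u_j)\in \+G_{\IC}}=p_{u,u_j}.
\end{align*}
Thus, whenever the $\SIR$ threshold condition fires and $u$ is inserted into $\RR_2$, the $\IC$ threshold condition is automatically satisfied as well, so the same edge $(u,u_j)$ is inserted into $E_1$ and $u$ is inserted into $\RR_1$. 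By the inductive hypothesis $u_j\in\RR_2\subseteq \RR_1$, so $u$ really does reach $v$ through the augmented $E_1$. The final edge revelations at the end of the algorithm only enlarge $\+G_1$, and do not affect the invariant.

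The main obstacle, as just highlighted, is the single step where we need the $\SIR$ conditional inclusion probability to be dominated by the unconditional IC inclusion probability $p_{u,u_j}$; once Lemma~\ref{lem:negative-correlated} is in hand this becomes a one-line inequality, and all remaining work is purely bookkeeping about which set is updated when under the shared randomness. One subtlety worth flagging in the write-up is that the outer loop picks a single $u\in A$ at a time and reveals edges to the \emph{current} $\RR_2$; I would note that this ordering is compatible with the $\SIR$ conditional-probability formula because the set $E_u'$ of already-revealed outgoing edges at $u$ is exactly the conditioning event used in the coupling, which is what makes Lemma~\ref{lem: correctdistribution} apply verbatim.
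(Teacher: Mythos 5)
Your proposal is correct and follows essentially the same route as the paper's proof: the chain $\RR_{\+G_1}(v)\supseteq\RR_1$ and $\RR_2=\RR_{\+G_2}(v)$ by inspecting which edges the coupling has revealed, and the middle containment $\RR_1\supseteq\RR_2$ from the shared uniform $p$ together with Lemma~\ref{lem:negative-correlated} and the matching marginals. Your write-up is in fact more explicit than the paper's (which dispatches $\RR_1\supseteq\RR_2$ in one sentence and argues $\RR_2=\RR_{\+G_2}(v)$ by a path contradiction rather than your equivalent termination-condition argument), but the underlying ideas are identical.
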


\begin{proof}
     We first claim that $\RR_{2}=\RR_{\+G_2}(v)$. Since $\RR_{2}\subseteq \RR_{\+G_2}(v)$, it suffices to show $\RR_{\+G_2}(v)\setminus \RR_{2}=\emptyset$. Otherwise, for arbitrary $u\in \RR_{\+G_2}(v)\setminus \RR_{2}$, there exists a directed path $P=(u_1=u,u_2,\dots,u_\ell=v)$ in $\+G_2$. However, our coupling procedure would include each $u_i$ ($i\in [\ell]$) in $\RR_{2}$ contradicting $\RR_{\+G_2}(v)\setminus \RR_{2}\neq \emptyset$.
     
     As for $\RR_{1}\supseteq \RR_{2}$, it holds simply by the positive correlation property of $\SIR$ in \Cref{lem:negative-correlated} and the assumption $\Pr{e\in \+G_{\IC}}=\Pr{e\in \+G_{\SIR}}$ for any directed edge $e\in E$. Again, $\RR_{\+G_1}(v)\supseteq \RR_{1}$ is obvious, and the proof is complete combining all these facts.
 \end{proof}
 \vspace{-1mm}

     

We are now ready to complete the proof of \Cref{lem:dominance-coupling}.

\begin{proof}[Proof of \Cref{lem:dominance-coupling}]
    As~\Cref{lem: correctdistribution} shown, $(\+G_{1},\+G_{2})$ is the coupling of $\+G_{\IC}$ and $\+G_{\SIR}$. Moreover, for any seed set $S\subseteq V$, $S\cap \RR_{\SIR}(v)\neq \emptyset$ implies $S\cap \RR_{\IC}(v)\neq \emptyset$ since $\RR_{\+G_1}(v)$ is the superset of $\RR_{\+G_2}(v)$ by \Cref{lem:RRset-domainance}. Hence, the inequality holds.
\end{proof}

\subsection{Significant Dominance}
In the previous section, we have shown that, conditioned on the comparable spreading ability of each edge, there is a domination from $\IC$ to $\SIR$. In this section, we demonstrate that such domination can be significant in some networks. This dominance further yields significantly different seeding strategies under these two models. The graph we construct is represented in~\Cref{fig:gadget}. 

\begin{figure}[!h]
    \centering
    \includegraphics[scale=0.4]{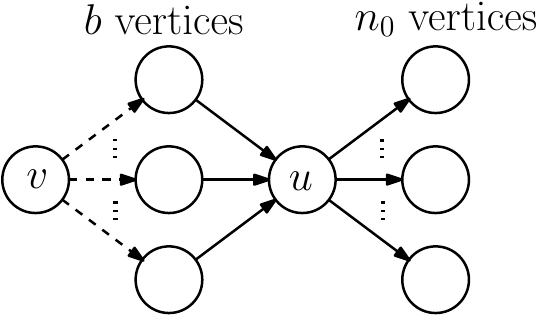}
    \caption{An instance to show the significant dominance. A vertex $v$ is connected to $b$ vertices by $b$ dashed edges, and these $b$ vertices are connected to a vertex $u$ by $b$ solid edges. Finally, $u$ is connected to $n_0$ vertices by $n_0$ solid edges. The solid edges are ``deterministic'', meaning that the parameters $\beta$ and $p$ in both $\IC$ and $\SIR$ models are set to $1$. For the dashed edges, they share the same $\SIR$ parameters $\beta$ and $\gamma$, and their $\IC$ parameter is decided based on $\beta$, $\gamma$ and \Cref{eqn:IC-SIR}. The values of $b$, $n_0$, $\beta$, and $\gamma$ are to be decided.}
    \label{fig:gadget}
\end{figure}


We show that for some specific social networks, the influence spread of $\IC$ could significantly dominate $\SIR$ with parameters satisfying \Cref{eqn:IC-SIR}.

The following proposition shows that, for the instance in \Cref{fig:gadget}, the influence spread of $\IC$ can significantly dominate that of $\SIR$ for the seed set $\{v\}$ if the parameters $b$, $n_0$, $\beta$, and $\gamma$ are set appropriately.
\begin{restatable}{proposition}{ICDominateSIRGreatly}\label{prop:greatdominate}
    Consider the instance in \Cref{fig:gadget}. For any $R>0$, there exist $b$, $n_0$, $\beta$, and $\gamma$ such that $\frac{\sigma_\IC(\{v\})}{\sigma_\SIR(\{v\})}>R$.
\end{restatable}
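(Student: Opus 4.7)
My plan is to compute $\sigma_{\IC}(\{v\})$ and $\sigma_{\SIR}(\{v\})$ in closed form for the gadget in \Cref{fig:gadget}, then tune the parameters to make the ratio exceed any target $R$.

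First, the cascade downstream of the bottleneck $u$ is deterministic because every solid edge satisfies $\beta = p = 1$: once $u$ is infected, all $n_0$ solid successors of $u$ become infected, and $u$ itself is infected exactly when at least one $w_i$ gets infected by $v$ along the dashed edges. Letting $P \in \{P_{\IC}, P_{\SIR}\}$ denote this ``breakthrough'' probability and using the shared marginal $p = \beta/(\beta+\gamma-\beta\gamma)$ from \Cref{eqn:IC-SIR}, the expected spread in either model decomposes as
\[
\sigma(\{v\}) \;=\; 1 \;+\; bp \;+\; P\,(1+n_0).
\]
Since the first two terms are model-free, the ratio $\sigma_{\IC}(\{v\})/\sigma_{\SIR}(\{v\})$ converges to $P_{\IC}/P_{\SIR}$ as $n_0\to\infty$ with $(b,\beta,\gamma)$ fixed, reducing the whole question to one about these two probabilities.

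Second, I evaluate $P_{\IC}$ and $P_{\SIR}$ explicitly. Independence of the $b$ dashed edges under $\IC$ gives the textbook $P_{\IC} = 1-(1-p)^b$. Under $\SIR$, the live-edge-graph formulation of \Cref{subsec:live-edge-SIR} reveals that the $b$ dashed edges all share $v$'s single recovery sequence $\boldsymbol{R}_v$; conditioning on $v$'s geometric survival time $T\sim\!{Geom}(\gamma)$ and using independence of the Bernoulli infection trials within each realization of $T$ yields
\[
P_{\SIR} \;=\; 1-\sum_{t\ge 1}\gamma(1-\gamma)^{t-1}(1-\beta)^{bt}
\;=\; 1-\frac{\gamma(1-\beta)^b}{1-(1-\gamma)(1-\beta)^b}.
\]
\Cref{lem:negative-correlated} already ensures $P_{\IC} \ge P_{\SIR}$, so what remains is a purely quantitative question about the multiplicative gap.

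The main obstacle is this last step: finding $(b,\beta,\gamma)$ that make $P_{\IC}/P_{\SIR}$ exceed any prescribed value. The guiding intuition is the bimodal behavior of the $\SIR$ out-cluster of $v$ --- either $v$ recovers early and all $b$ dashed edges are blocked, or $v$ survives long and many edges are live --- which concentrates the probability mass of the ``all-blocked'' event far above the independent $\IC$ estimate $(1-p)^b$. I plan to scale $\beta$ and $\gamma$ to zero with $b$ growing at a carefully chosen rate so that $\Pr{\text{all blocked in SIR}}$ stays bounded away from $0$ (keeping $P_{\SIR}$ bounded away from $1$) while $(1-p)^b\to 0$ (so $P_{\IC}\to 1$). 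Once such parameters are identified, taking $n_0$ large enough to dominate the $1+bp$ term pushes $\sigma_{\IC}(\{v\})/\sigma_{\SIR}(\{v\})$ past $R$.
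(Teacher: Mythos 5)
Your setup is sound and matches the paper's strategy for this proposition: the same decomposition $\sigma(\{v\})=1+bp+P\cdot(1+n_0)$, the same reduction (letting $n_0\to\infty$) to comparing the two breakthrough probabilities at $u$, and your closed forms $P_{\IC}=1-(1-p)^b$ and $P_{\SIR}=1-\frac{\gamma(1-\beta)^b}{1-(1-\gamma)(1-\beta)^b}$ are correct. The gap is in the one step you defer. You never exhibit parameters, and the plan you state for finding them cannot work: keeping $\Pr{\text{all blocked in SIR}}$ bounded away from $0$ only keeps $P_{\SIR}$ bounded away from $1$, and combined with $P_{\IC}\to1$ this yields a ratio $P_{\IC}/P_{\SIR}$ bounded by a fixed constant, not one exceeding an arbitrary $R$. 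For an unbounded ratio you would need $P_{\SIR}\to0$ while $P_{\IC}/P_{\SIR}\to\infty$, a much stronger requirement. The underlying conceptual slip is that a large multiplicative gap between the two ``all-blocked'' probabilities does not translate into a large multiplicative gap between their complements.

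In fact, your own closed forms show the plan cannot be completed for this gadget. Writing $A=(1-\beta)^b$, one has
\[
P_{\SIR}=\frac{1-A}{1-(1-\gamma)A}\ \ge\ \frac{1-A}{(1-A)+\gamma},\qquad 1-A\ \ge\ \frac{b\beta}{1+b\beta},\qquad P_{\IC}\ \le\ \min\{1,\,bp\}\ \le\ \min\{1,\,b\beta/\gamma\},
\]
and a three-case check ($b\beta\ge1$; $b\beta<1$ with $b\beta\le\gamma$; $b\beta<1$ with $b\beta>\gamma$) gives $P_{\SIR}\ge P_{\IC}/3$ for \emph{every} choice of $b$, $\beta$, $\gamma$. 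Since the expected number of infected middle nodes equals $bp$ under both models, this forces $\sigma_{\IC}(\{v\})/\sigma_{\SIR}(\{v\})\le3$ on this instance, so the obstacle you flag is not merely unresolved but insurmountable with this construction. You should also be aware that the paper's own argument runs into the same wall: it sets $\gamma=b^{-1/2}$, $\beta=b^{-3/2}$ and asserts $\frac{\gamma(1-\beta)^b}{1-(1-\gamma)(1-\beta)^b}\to0$ because the numerator vanishes, but the denominator is $2b^{-1/2}+O(b^{-1})$, so the quantity tends to $1/2$ (and, as written, that quantity is the probability that $u$ is \emph{not} infected, not the infection probability the bound on $\sigma_{\SIR}$ requires). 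Either the statement needs a different gadget or a corrected quantitative argument; your write-up, as it stands, does not supply one.
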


The high-level idea for \Cref{prop:greatdominate} is as follows.
By setting $n_0$ to be very large, $\sigma(\{v\})$ under both models is approximately proportional to the probability that vertex $u$ is infected.
It then remains to argue that $u$ is infected with a significantly smaller probability in the $\SIR$ model.
Notice that, by the definition of the solid edges, $u$ is infected (with probability $1$) once one of its in-neighbors is infected.
Therefore, all we care about is the probability that \emph{at least one of the $b$ vertices is infected}, which is $1$ minus the probability that none of the $b$ vertices is infected.
The latter probability is significantly higher in the $\SIR$ model due to the positive correlation in \Cref{lem:negative-correlated}.
The formal proof of \Cref{prop:greatdominate} is available in \Cref{sec:appendix}.


\subsection{Difference in Seeding Strategies}
We have shown that the expected number of infections under $\IC$ is weakly larger than that under $\SIR$ given the same seed set and with the matching parameters, and sometimes $\IC$ significantly dominates $\SIR$.
This significant dominance yields significant different seeding strategies under these two models, as shown by an example illustrated in~\Cref{fig:gadget2}.

In~\Cref{fig:gadget2}, the two red nodes are potential seeds.
The red node on the left-hand side is the center of a star, while, on the right-hand side, we duplicate the gadget in~\Cref{fig:gadget} such that the node $v$ in~\Cref{fig:gadget} is the red node (i.e., all the duplications share only the same vertex $v$; two duplications are drawn in \Cref{fig:gadget2}).
By \Cref{prop:greatdominate}, the seed on the right-hand side is much more powerful in the $\IC$ model than that in the $\SIR$ model; the seed on the left-hand side has the same power in both models.
The reader can easily verify that, by appropriately setting up the parameters, the seed on the left-hand side is a much better choice for the $\SIR$ model and the seed on the right-hand side is a much better choice for the $\IC$ model.

\begin{figure}[!h]
    \centering
    \includegraphics[scale=0.35]{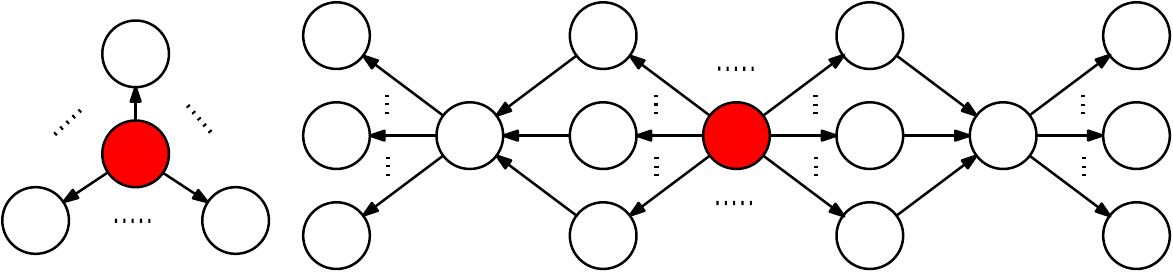}
    \caption{A graph to show the discrepancy of seeding strategy in $\IC$ and $\SIR$.
    \label{fig:gadget2}}
\end{figure}

\section{$\InfMax$ in the $\SIR$-based Models}
\label{sec:SIRAlgorithms}
In this section, we present algorithmic results for $\InfMax$ in the $\SIR$ model. 
It is worth noting that $\InfMax$ in various diffusion models can be formulated as a \emph{submodular} optimization problem.
In our case with the $\SIR$ model, the expected influence spread function $\sigma(\cdot)$ is also submodular.
This follows from the fact that a diffusion model with a live-edge graph formulation is always submodular~\citep{KKT15}.
With submodularity, it is well-known that the simple greedy algorithm achieves a $(1-1/e)$-approximation~\citep{nemhauser1978analysis}.
However, computing the function $\sigma(\cdot)$ is a \#P-hard problem~\citep{KKT15,CCY10,CYZ10}.
To apply the greedy algorithm, the influence spread function $\sigma(\cdot)$ is approximated by the Monte-Carlo method, so we have a randomized algorithm that achieves a $(1-1/e-\varepsilon)$-approximation with high probability.
Notice that this approximation ratio is optimal assuming $\text{P}\neq\text{NP}$, as $\InfMax$ is NP-hard to approximate to within a factor of more than $(1-1/e)$ for the $\IC$ model~\citep{KKT15}, and the special case of the $\SIR$ model with $\gamma_u=1$ for every $u\in V$ is exactly the $\IC$ model.

Despite that the greedy algorithms based on the Monte-Carlo are slow in practice.
Researchers have then focused on designing algorithms that are fast in practice while keeping the theoretical approximation guarantee.
One of the most successful types of such algorithms is based on \emph{reverse-reachable sets} \citep{BBCL2014,TYXY14,TSX2015,CW18}, which yields a near-linear time algorithm with $1-1/e$ approximation ratio.
The reverse reachable set technique, proposed by~\cite{BBCL2014}, makes use of the observation in \Cref{lem:reverse-reachable-set-characterization}.
Our objective is to find $S$ that maximizes $\Pr{{S\cap \RR_{\+G}}\neq \emptyset}$.
This is done by sampling a sufficient number of copies of $\RR_\+G$ and finding $S$ that intersects as many copies as possible, which becomes a classical $k$-max coverage problem.
This type of algorithm has been widely used for the $\IC$ model and proved successful in terms of scalability.

Subsequent reverse-reachable-set-based algorithms aim to reduce the number of $\RR_{\+G}$ sets sampled while still keeping the approximation guarantee.
Notably, the $\IMM$ algorithm proposed by \cite{TSX2015} uses a martingale method to estimate the number of reverse reachable sets. 

\begin{algorithm}[!h]
		\caption{The $\IMM$ algorithmic framework $\IMM(G,k,\epsilon,\ell)$}
		\label{algo:IMM}
		\KwIn{A directed graph $G=(V,E)$ where $\vert{V}\vert=n,\vert{E}\vert=m$ with live-edge diffusion model $\+G$, the size of the seed set $k$ and parameters $\epsilon,\ell$.}
		\KwOut{The seed set $S^*$.}
		$\alpha \gets \sqrt{\ell \cdot \log n+\log 2}$, $\beta \gets \sqrt{\left(1-1/e\right) \cdot \left(\log \left(  _k^n \right)+\ell \cdot \log n+\log 2\right)}$ and $\gamma \gets 4 + \log (8 \cdot \log n) / \log n$\;
		$\ell' \gets \ell +\log 2 /\log n + \gamma$ and $\epsilon'\gets \sqrt{2}\cdot \epsilon$\; $\lambda'\gets \frac{(2+\frac{2}{3}\epsilon')\cdot \big(  \log\binom{n}{k} +  \ell'\cdot \log n + \log \log_2 n\big)\cdot n}{\epsilon'^2}$ and $\lambda^* \gets 2 \cdot n \cdot \left( \left( 1-1/e  \right) \cdot \alpha + \beta \right)^2 \cdot \varepsilon ^{-2} $\;	
        Initialize a set $\+R=\emptyset$ and an integer $\~{LB}=1$\;
        \For{$i=1$ to $\log_2 n-1$}
        {
            $x\gets \frac{n}{2^i}$ and $\theta_i=\frac{\lambda'}{x}$\;
            \While{$|\+R|\leq \theta_i$}
            {
                $\RR\gets \!{RRSampling}(G,\+G)$\tcp*{sample a single $\RR_{\+G}$ set}
               $\+R\gets \+R\cup \set{\RR}$\;
            }
            $S_i\gets \!{NodeSelection}(G,\+R,k)$\tcp*{see \Cref{algo:nodeselection}}
            \If{$n\cdot F_{\+R}(S_i)\geq (1+\epsilon')\cdot x $}
            {
               $\~{LB}=n\cdot F_{\+R}(S_i)/(1+\epsilon')$\;
               \textbf{break}\;
            }
        }
        $\theta=\lambda^*/\~{LB}$\;
        \While{$|\+R|\leq \theta$}
            {
                $\RR\gets \!{RRSampling}(G,\+G)$\tcp*{sample a single $\RR_{\+G}$ set}
               $\+R\gets \+R\cup \set{\RR}$\;
            }
         $S^*\gets \!{NodeSelection}(G,\+R,k)$\tcp*{see \Cref{algo:nodeselection}}
        \Return{$S^*$.}
\end{algorithm}
\begin{algorithm}[!h]
		\caption{$\NODE(G,\+R,k)$}
		\label{algo:nodeselection}
		\KwIn{A directed graph $G=(V,E)$, a collection of $\RR$ set $\+R$ and the size of the seed set $k$.}
		\KwOut{A seed set $S^*$.}

        Initialize $S^*\gets \emptyset$\;
        \While{$(|S^*|<k)$ and $(V\setminus S^*\neq \emptyset)$}
        {
            $v\gets \argmax_{u\in V} F_{\+R}(S^*\cup u)-F_{\+R}(S^*)$\;
            $S^*\gets S^*\cup \set{v}$\;
        }
        \Return{$S^*$.}
\end{algorithm}

Compared to the algorithm proposed by~\cite{BBCL2014}, the $\IMM$ algorithm leverages the advantages of optimizing the size of the reverse reachable set required for estimating the influence spread.
The detailed implementation is shown in~\Cref{algo:IMM}.\footnote{An issue in the original IMM algorithm was later discovered by~\citet{CW18}. The algorithm we present here is based on the version after the correction in~\citet{CW18}.}
In the algorithm, we use $\+R$ to collect all the sampled reverse reachable sets and use $F_{\+R}(S)$ to denote the fraction of reverse reachable sets in $\+R$ covered by any seed set $S\subseteq V$. Here, the framework fits many live-edge graph diffusion models by replacing the reverse reachable set sampling subroutine $\!{RRSampling}(G,\+G)$. 

Let $R\subseteq V$ be the vertices of a reverse reachable set.
Let $w(R)$ be the number of edges of $G$ incident to $R$.
Examining the analysis of the $\!{IMM}$ algorithm in~\citet{TSX2015} and \citet{CW18}, one can verify the following lemma.

 \begin{lemma}\label{lem-correctness-IMM}
     Consider a directed graph $G=(V,E)$ where $|V|=n, |E|=m$ with live-edge diffusion model $\+G$, the size of the seed set $k$ and parameters $\varepsilon,\ell$. If the subroutine $\!{RRSampling}(G,\+G)$ returns an $\RR_{\+G}$ set $R$ in $\alpha\cdot w(R)$ time, the algorithm $\IMM(G,k,\varepsilon,\ell)$ (\Cref{algo:IMM}) returns a $(1-1/e-\varepsilon)$-approximate solution with at least $1-1/n^{\ell}$ probability in $O(\alpha\cdot(k+\ell)(m+n)\log n/\epsilon^2)$ time.
\end{lemma}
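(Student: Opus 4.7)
The plan is to verify that the correctness and running-time analysis of the $\IMM$ algorithm in~\citet{TSX2015}, with the correction of~\citet{CW18}, carries over essentially verbatim to any live-edge graph diffusion model $\+G$, provided that $\!{RRSampling}(G,\+G)$ returns an $\RR_{\+G}$ sample from the correct distribution in time $\alpha\cdot w(R)$. The key observation is that the original analysis uses only two abstract properties of the diffusion model: (i)~the coverage identity $\sigma(S) = n\cdot \Pr{S\cap\RR_{\+G}\neq\emptyset}$ from \Cref{lem:reverse-reachable-set-characterization}, and (ii)~the fact that independent calls to $\!{RRSampling}$ produce i.i.d.\ samples of $\RR_{\+G}$. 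Both properties hold here by construction of the live-edge formulations developed in \Cref{sect:liveandRRset}, so the $\IC$-specific manipulations of~\citet{TSX2015,CW18} are in fact not really $\IC$-specific.

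For the approximation guarantee, I would first invoke property~(i) to conclude that $n\cdot F_{\+R}(S)$ is an unbiased estimator of $\sigma(S)$ for each fixed $S$. The doubling phase of \Cref{algo:IMM} then produces, except with probability at most $n^{-\ell}/2$, a value $\~{LB}$ that underestimates the optimal size-$k$ influence $\mathrm{OPT}$ only by a constant factor. Conditioned on this event, the $\theta=\lambda^{*}/\~{LB}$ samples drawn in the main phase suffice to ensure $|n\cdot F_{\+R}(S)-\sigma(S)|\leq \varepsilon'\cdot\mathrm{OPT}$ simultaneously for an optimal seed set and for the algorithm's output $S^{*}$, except with probability $n^{-\ell}/2$, by the martingale concentration bound of~\citet{CW18}. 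Combined with the $(1-1/e)$-approximation of greedy $k$-max coverage that $\NODE$ implements on the hypergraph $\+R$, a union bound yields the claimed $(1-1/e-\varepsilon)$-approximation with probability at least $1-n^{-\ell}$.

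For the running time, $\!{RRSampling}$ costs $\alpha\cdot w(R)$ per call by hypothesis, and $\NODE$ runs in $O(\sum_{R\in\+R}|R|+kn)$ using a lazy priority queue, so the total expected work is dominated by $\alpha\cdot\E{\sum_{R\in\+R}w(R)}$ plus $O(kn)$. A double-counting argument using property~(i) gives $\E{w(\RR_{\+G})}=\frac{1}{n}\sum_{u\in V}d_{\mathrm{out}}(u)\,\sigma(\{u\})\leq m\cdot\mathrm{OPT}/n$. Combining this with $|\+R|\leq\theta=O\bigl(n(k+\ell)\log n/(\varepsilon^{2}\cdot\mathrm{OPT})\bigr)$ from the choice of $\lambda^{*}$, and absorbing the $O(kn)$ term into the $(m+n)$ factor, the total expected time becomes $O(\alpha(k+\ell)(m+n)\log n/\varepsilon^{2})$ as claimed.

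The main obstacle is handling the adaptive stopping time of the doubling phase: the index $i$ at which the loop exits depends on the samples already drawn, so a naive Chernoff union bound is insufficient. This is precisely what the martingale argument of~\citet{CW18} resolves, and since that argument depends only on i.i.d.\ sampling and on the unbiasedness of $n\cdot F_{\+R}(S)$, it applies to arbitrary live-edge graph models without modification; the task is merely to quote it with the new $\RR_{\+G}$-sampling subroutine in place and to plug in the per-sample cost $\alpha\cdot w(R)$ wherever the original analysis uses its $\IC$-specific counterpart.
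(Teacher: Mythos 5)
Your proposal is correct and follows the same route as the paper, which itself gives no detailed proof but simply asserts that the lemma follows by examining the analysis of \citet{TSX2015} and \citet{CW18}; you flesh out exactly why that analysis transfers, namely that it only uses the coverage identity $\sigma(S)=n\cdot\Pr{S\cap\RR_{\+G}\neq\emptyset}$, i.i.d.\ sampling, and the per-sample cost $\alpha\cdot w(R)$. The only minor imprecision is in the double-counting step, where the relevant degree for bounding $\E{w(\RR_{\+G})}$ is the in-degree (or total degree, matching the paper's definition of $w(R)$ as edges incident to $R$) rather than $d_{\mathrm{out}}$, but this does not affect the bound $\E{w(\RR_{\+G})}\leq m\cdot\mathrm{OPT}/n$.
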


In the following, we introduce how we adapt the $\IMM$ algorithm to the $\SIR$ model and its variants $\TSIR$.
Specifically, we will design the subroutine $\!{RRSampling}$ that is specifically for the $\SIR$ model and the $\TSIR$ model respectively.

\subsection{Algorithm for \InfMax with $\SIR$ Model}\label{sec:SIRIMM}

For the $\SIR$ model, one natural idea is to sample each $\RR_\SIR$ based on the live-edge graph definition in \Cref{subsec:live-edge-SIR}. However, both $\boldsymbol{R}_v$ and $\boldsymbol{I}_e$ are sequences of infinitely many random variables.
To accurately and efficiently sample a reverse reachable set $\RR_\SIR$, we instead directly calculate the probability that an edge is live.
For example, $\Pr{e\in\+G_\SIR}$ can be computed by \Cref{eqn:IC-SIR}, which is a geometric series that admits a compact formula.
More generally, for a vertex $u$ with a set $E'_u$ of $u$'s incident edges and a particular incident edge $e$, the conditional probability $\Pr{e\in\+G_\SIR\mid E'_u\cap\+G_\SIR=\emptyset}$ can also be expressed as a geometric series by applying the law of conditional probability.
With these, the reverse reachable set $\RR_{\SIR}$ can be sampled in a way similar to the coupling procedure~(\Cref{algo:coupling}) in the proof of \Cref{lem:dominance-coupling}.
Details for our algorithms are available in \Cref{algo:SIRRRset}.
\begin{algorithm}[!h]
		\caption{$\RR_{\SIR}\!{Sampling}(G,\boldsymbol{\beta},\boldsymbol{\gamma})$}
		\label{algo:SIRRRset}
 		\KwIn{A directed graph $G=(V,E)$ with the $\SIR_{\boldsymbol{\beta},\boldsymbol{\gamma}}$ diffusion model.}
		\KwOut{A $\RR_{\SIR}$ set.}
        Label an arbitrary order on the nodes and edges in $G$\;
        Sample a node $v\in V$ uniformly at random\;
        Initialize $\RR_{\SIR} \gets \set{v}$, and $E'_u \gets \emptyset$ for any $u\in V$\;
		\While{ $A\gets \set{u\in V\setminus \RR_{\SIR}\mid \exists u'\in \RR_{\SIR}, (u,u')\in E\setminus E'_u }$ and $A\neq \emptyset$}
                {
                    Select the node $u\in A$ with the smallest index\;
                    Let $\Big((u,u_1),(u,u_2),\dots, (u,u_\ell) \Big)$ be the directed edges not in $E'_u$ with endpoints in $\RR_{\SIR}$ listed by the index order\;
                    $j\gets 0$, $\!{flag}\gets 0$\;
                    \Repeat{ $j=\ell$ or $\!{flag}= 1$}
                    {
                        $j\gets j+1$\;
                        Sample $p\sim\!{Unif}[0,1]$\;
                        $\RR_{\SIR}\gets \RR_{\SIR}\cup \set{u}$, $\!{flag}\gets 1$ when $p\leq \Pr{(u,u_j)\in \+G_{\SIR}\mid E'_u\cap \+G_{\SIR}=\emptyset}$\;
                        
                         $E'_u\gets E'_u\cup\set{(u,u_j)} $\;
                    }
                }
        \Return{$\RR_{\SIR}$.}
\end{algorithm}
Based on the $\IMM$ algorithmic framework, we propose the $\SIRIMM$ method by extending the $\IMM$ algorithm to $\SIR$, specifically, we designed the $\RR_{\SIR}\!{Sampling}(G,\boldsymbol{\beta},\boldsymbol{\gamma})$ that sample the reverse reachable set for $\SIR$. The sampling algorithm can be easily adapted from the coupling process ~\Cref{algo:coupling}. As mentioned in~\Cref{sec:SIRAlgorithms}, we can directly compute the probability that an edge is live. 
Specifically, for each node $u$, let $\Gamma_u = \{v_1,\dots,v_\ell\}$ represent the set of outgoing neighbors of $u$ in a fixed order. Denote by $\gamma$ and $\beta$ the recovery probability and infection probability in the $\SIR$ model, respectively. We compute the probability that $u$ successfully infects its $i$-th neighbor, conditioned on $u$ having failed to infect all of the first $i-1$ neighbors, as follows:
\begin{align*}
    &\Pr{v_i \mid \overline{ v_{[i-1]}}} \\
    =& \frac{\Pr{v_i \wedge \overline{ v_{[i-1]}}}}{\Pr{\overline{ v_{[i-1]}}}}\\
    =&\frac{\sum_{t=1}^\infty (1-\gamma)^{(t-1)} \cdot \gamma \cdot (1-\beta)^{(i-1) \cdot t}(1-(1-\beta)^t)}{\sum_{t=1}^{\infty}(1-\gamma)^{(t-1)}\cdot \gamma \cdot (1-\beta)^{(i-1)\cdot t}}\\
    =&\frac{\frac{(1-\beta)^{(i-1)}}{1-(1-\gamma)(1-\beta)^{(i-1)}}-\frac{(1-\beta)^i}{1-(1-\gamma)(1-\beta)^{i}}}{\frac{(1-\beta)^{(i-1)}}{1-(1-\gamma)(1-\beta)^{(i-1)}}}
\end{align*}
where $v_i \in \Gamma_u$ is the $i$-th neighbor of $u$, and $v_{[i-1]} \subseteq \Gamma_u$ represents the first $i-1$ neighbors of $u$, and we slightly abuse the notations by letting $v_i$ denote the \emph{event} that $u$ successfully infects $v_i$ and $\overline{v_{[i-1]}}$ denote the \emph{event} that $u$ fails to infect each vertex in $v_{[i-1]}$.

Therefore, the probability $\Pr{v_i \mid \overline{ v_{[i-1]}}}$ can be computed in $O(1)$ time.
With this, the remaining details for sampling a reverse-reachable set are the same as they are in the $\IC$ model.
Our algorithm is presented in Algorithm~\ref{algo:SIRRRset}.
According to~\Cref{lem:RRset-domainance}, our algorithm returns $\RR_{\SIR}$ faithfully.

Replacing the $\RR$ Sampling procedure in~\Cref{algo:IMM} by~\Cref{algo:SIRRRset}, we obtain the $\SIRIMM$ which returns an $(1-1/e-\epsilon)$-approximate solution with at least probability $1-1/n^\ell$.

Since the (conditional) probability that an edge is live can be computed in $O(1)$ time, the time complexity for sampling a single reverse reachable set is the same as in the $\IC$ model, namely, $\alpha\cdot w(R)$ for $\alpha=O(1)$.
By Lemma~\ref{lem-correctness-IMM}, the time complexity for $\SIRIMM$ is $O((k+\ell)(m+n)\log n/\epsilon^2)$.

\begin{theorem}
    There exists an algorithm that takes as inputs $k,\ell\in\mathbb{Z}^+$ and $\epsilon\in\mathbb{R}^+$, and a directed graph $G=(V,E)$ where $|V|=n, \vert E\vert=m$ with the diffusion model $\SIR_{\boldsymbol{\beta,\gamma}}$ and outputs a $(1-1/e-\epsilon)$-approximately optimal expected influence spread of $k$ seeds with at least $1-1/n^\ell$ probability in \(O\left((k+l) \cdot (n+m) \cdot \log n / \varepsilon^2\right)\) time.
\end{theorem}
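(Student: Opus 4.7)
The plan is to instantiate the generic $\IMM$ framework (\Cref{algo:IMM}) by designing a faithful and efficient reverse-reachable-set sampling subroutine $\!{RRSampling}$ for the $\SIR$ model, and then invoke \Cref{lem-correctness-IMM} as a black box to transfer the approximation guarantee and running time bound. The entire argument therefore reduces to two sub-claims about the subroutine in \Cref{algo:SIRRRset}: (i) correctness, i.e., the returned set has the same distribution as $\RR_\SIR$; and (ii) an $\alpha \cdot w(R)$ time bound with $\alpha=O(1)$, where $w(R)$ counts the edges incident to the returned set.

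First I would verify correctness. The subroutine is obtained by running the reverse Breadth-First-Search procedure that was used in \Cref{algo:coupling}, but only for the $\SIR$ side of the coupling. The key point is that at each step, when a candidate node $u$ has unrevealed outgoing edges $(u,u_1),\ldots,(u,u_\ell)$ into the current set $\RR_\SIR$, the algorithm reveals them one by one, declaring $(u,u_j)$ live with probability $\Pr{(u,u_j)\in\+G_\SIR \mid E'_u\cap\+G_\SIR=\emptyset}$, conditioned on the previously revealed outgoing edges $E'_u$ of $u$ being blocked. Exactly the same marginal-distribution-preserving induction used in \Cref{lem: correctdistribution} then shows that the realized set of live/blocked edges is distributed according to the true $\SIR$ live-edge distribution. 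Combined with \Cref{lem:RRset-domainance} (applied to the $\SIR$ side alone), the set returned by the subroutine coincides in distribution with $\RR_\SIR(v)$ for a uniformly random seed $v$.

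Next I would bound the running time of a single call to $\RR_\SIR\!{Sampling}$ by $O(w(R))$. Each outgoing edge of $u$ is touched at most once during the process, so it suffices to argue that the conditional probability $\Pr{(u,u_j)\in\+G_\SIR \mid E'_u\cap\+G_\SIR=\emptyset}$ can be evaluated in $O(1)$ time. Using the law of conditional probability and the definition of the $\SIR$ live-edge graph, this probability is a ratio of two geometric series of the form $\sum_{t=1}^\infty (1-\gamma_u)^{t-1}\gamma_u (1-\beta_{u,u_j})^{(i-1)t}\bigl(1-(1-\beta_{u,u_j})^t\bigr)$ over $\sum_{t=1}^\infty (1-\gamma_u)^{t-1}\gamma_u(1-\beta_{u,u_j})^{(i-1)t}$, both of which admit closed-form evaluations in $O(1)$ arithmetic operations once the index $i-1$ of blocked edges already in $E'_u$ is known (which is maintained on the fly). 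Hence the per-edge cost is $O(1)$ and the total cost of one sampling call is $O(w(R))$, i.e., $\alpha=O(1)$ in the notation of \Cref{lem-correctness-IMM}.

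Finally, I would combine the two sub-claims: plugging the subroutine into \Cref{algo:IMM} yields the $\SIRIMM$ algorithm, and \Cref{lem-correctness-IMM} immediately gives a $(1-1/e-\varepsilon)$-approximation with success probability at least $1-n^{-\ell}$ in time $O((k+\ell)(m+n)\log n/\varepsilon^2)$, matching the stated bound. The only nontrivial obstacle is the correctness of the closed-form conditional probability expression, since a careless setup could overcount the cases that correspond to $u$ recovering at different rounds $t$; the remedy is to parametrize on the recovery round $t$ exactly as in the derivation of \Cref{eqn:IC-SIR} and apply the law of conditional probability as in the proof of \Cref{lem: correctdistribution}, so the ratio telescopes into the compact form used by the subroutine.
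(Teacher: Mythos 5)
Your proposal is correct and follows essentially the same route as the paper: instantiate the $\IMM$ framework with the reverse-BFS sampling subroutine of \Cref{algo:SIRRRset}, justify its distributional correctness via the conditional-probability revelation argument, compute each conditional probability $\Pr{(u,u_j)\in\+G_\SIR\mid E'_u\cap\+G_\SIR=\emptyset}$ in $O(1)$ time as a ratio of closed-form geometric series, and conclude via \Cref{lem-correctness-IMM}. If anything, your explicit appeal to the induction of \Cref{lem: correctdistribution} for the correctness of the sampled distribution is a slightly cleaner citation than the paper's own reference to \Cref{lem:RRset-domainance} at that point.
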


\subsection{Algorithm for \InfMax with $\TSIR$ Model}

The case with the $\TSIR$ model is more involved due to the extra time-dependency.
Intuitively, $\RR_\TSIR(v)$ is a truncated version of $\RR_\SIR(v)$.
In addition, we need extra information in addition to the probability $\Pr{e=(u,v)\in\+G_\SIR}$, as it now matters in which round the infection across $(u,v)$ succeeds.
Therefore, we have to apply the original definition in \Cref{sec:LiveedgeTSIR} to sample a reverse reachable set.
Due to the time-dependency feature, we need to truncate those vertices in $\RR_\SIR(v)$ that are too far away from $v$.
This is done by applying Dijkstra's algorithm to maintain a reverse shortest path tree rooted at $v$.
Notice that we do not need to consider infinitely many random variables for $\boldsymbol{R}_v$ and $\boldsymbol{I}_e$: vertices that are not recovered or edges that are not active for long periods need not be considered due to the time-dependency of the $\TSIR$ model.

We design the approximation algorithm $\TSIRIMM$ for $\InfMax$ in the $\TSIR$ model by simulating the live-edge graph for the $\TSIR$ model to sample a reverse reachable set, as shown in~\Cref{algo:TSIRRRset}.The algorithm begins by selecting a node $v\in V$ uniformly at random, and $v$ is added to the set $\RR_\TSIR$.
Whenever a new node $u$ is added to $\RR_\TSIR$, we explore all its incoming neighbors.
Notice that those neighbors may have been explored before.
For each in-neighbor $w$, if it has not been explored before, we sample its recovery time; otherwise, the recovery time for $w$ has been sampled before, and we do nothing (Line 7~11).
After this, we sample the number of rounds for the infection along the edge $(u,w)$ to be successful (Line 14-22).
If the number of rounds is more than the recovery time of $w$, the edge $(u,w)$ is not included in the edge set $E_\TSIR$.
Otherwise, we include $(u,w)$ in $E_\TSIR$, and let the said number of rounds be its \emph{weight}.
We now end up with a weighted graph (Line 25), where vertices that are reachable to $v$ are those vertices that can infect $v$ \emph{without the time constraint $T$}.
Finally, to impose the time constraint $T$, we find all vertices $u$ whose distance to $v$ is at most $T$ (Line 26).
This can be done by the standard Dijkstra's algorithm.

\begin{algorithm}[!h]
        \SetAlgoLined
		\caption{$\RR_{\TSIR}\!{Sampling}(G,\boldsymbol{\beta},\boldsymbol{\gamma},T)$}
		\label{algo:TSIRRRset}
		\KwIn{A directed graph $G=(V,E)$ with parameters $\boldsymbol{\beta,\gamma}$ and a time limit $T$.}
		\KwOut{An $\RR_{\TSIR}$ set.}
        Sample a node $v\in V$ uniformly at random\;
        Initialize $\RR_{\TSIR} \gets \set{v}$, $E_{\TSIR}\gets \emptyset$, and $\~{RecRound}_u \gets \bot$ for each $u\in V$\;
        Initialize $\tt{ToBeProcessed}\gets \set{v}$\;
        \While{$\tt{ToBeProcessed}\neq\emptyset$}{
            Get $u\in\tt{ToBeProcessed}$ and remove $u$ from $\tt{ToBeProcessed}$\;
            \For{each in-neighbor $w$ of $u$}{
                $t\gets0$\;
                \While{$\~{RecRound}_w = \bot$ and $t\leq T$}
                {
                    $t\gets t+1$ and sample $p \sim\!{Unif}[0,1]$\;
                    set $\~{RecRound}_w\gets t$ if $p\leq \gamma_w$\;
                }
                $t\gets 0$\;
                \Repeat{$t>\~{RecRound}_w$ or $t>T$}
                {
                    $t\gets t+1$ and sample $p \sim\!{Unif}[0,1]$\;
                    \If{$p\leq\beta_{w,u}$}{
                        update ${\tt ToBeProcessed}\gets{\tt ToBeProcessed}\cup\{w\}$ if $w\notin\RR_\TSIR$\;
                        $E_{\TSIR}\gets E_{\TSIR}\cup \set{(w,u)}$\;
                        $\RR_{\TSIR} \gets \RR_{\TSIR}\cup \{w\}$\;
                        $w_{w,u}\gets t$\;
                        \textbf{break}\;
                    }
                }
            }
        }
        $G^* \gets$ the edge-weighted graph $(\RR_\TSIR,E_{\TSIR},\set{w_e}_{e\in E_{\TSIR}})$\;
		update $\RR_{\TSIR}\gets \set{u\in \RR_{\TSIR}\mid \mbox{the distance from $u$ to $v$ is at most $T$ in $G^*$}}$\;
        \Return{$\RR_{\TSIR}$.}
\end{algorithm}

Finally, to analyze the time complexity of Algorithm~\ref{algo:TSIRRRset}, notice that sampling the recovery time for each vertex and the number of rounds for which the infection along each edge $(w,u)$ is successful requires $O(T)$ time, as we only need to flip the corresponding coin for at most $T$ times.
In addition, we need to perform Dijkstra's algorithm at the end, which requires $O(|E_\TSIR|+|\RR_\TSIR|\cdot\log(|\RR_\TSIR|))$ time, and we take the upper bound $O(w(R)\log(n))$.
Therefore, the time complexity for sampling a reverse reachable set is $O(T\log(n)\cdot w(R))$.
By Lemma~\ref{lem-correctness-IMM}, the overall time complexity for $\TSIRIMM$ is $O(T\cdot(k+l)\cdot(n+m)\cdot\log^2n/\epsilon^2)$.
\begin{theorem}
    There exists an algorithm that takes as inputs $k,\ell,T\in\mathbb{Z}^+$ and $\epsilon\in\mathbb{R}^+$, and a directed graph $G=(V,E)$ where $|V|=n, \vert E\vert=m$ with the diffusion model $\TSIR_{\boldsymbol{\beta,\gamma},T}$ and outputs a $(1-1/e-\epsilon)$-approximate optimal expected influence spread of $k$ seeds with at least $1-1/n^\ell$ probability in \(O\left(T\cdot (k+l) \cdot (n+m) \cdot \log^2 n / \varepsilon^2\right)\) time.
\end{theorem}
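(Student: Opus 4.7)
The plan is to instantiate the $\IMM$ framework (\Cref{algo:IMM}) with the subroutine $\RR_{\TSIR}\!{Sampling}$ of \Cref{algo:TSIRRRset} and invoke \Cref{lem-correctness-IMM}. To do so, I need to establish two things: (i) $\RR_{\TSIR}\!{Sampling}(G,\boldsymbol{\beta},\boldsymbol{\gamma},T)$ faithfully samples from the distribution of $\RR_{\TSIR}$ associated with the live-edge graph $\+G_\TSIR$ defined in \Cref{sec:LiveedgeTSIR}, and (ii) the subroutine runs in time $O(T\cdot w(R)\cdot \log n)$ on the reverse reachable set $R$ it returns, giving $\alpha=O(T\log n)$ in the language of \Cref{lem-correctness-IMM}.

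For correctness, I would argue by induction on the reverse BFS exploration order maintained by the \texttt{ToBeProcessed} queue. For each vertex $w\in V$ that is touched, the variable $\~{RecRound}_w$ is sampled \emph{once and then reused}: this is exactly the live-edge graph construction, because in $\+G_\TSIR$ the recovery sequence $\boldsymbol{R}_w$ is intrinsic to $w$ and governs \emph{all} of $w$'s outgoing edges jointly. Sampling $\~{RecRound}_w$ lazily as the minimum round $t\leq T$ with a successful Bernoulli$(\gamma_w)$ trial gives exactly the distribution of $w$'s recovery round truncated at $T$ (values beyond $T$ are irrelevant because an infection with span $>T$ cannot be part of any $T$-reachable path). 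Given $\~{RecRound}_w$, the infection span $w_{w,u}$ along an out-edge is the first round $t\leq \~{RecRound}_w$ with a Bernoulli$(\beta_{w,u})$ success, which matches the definition of $\+G_\TSIR$ exactly. Since the $\boldsymbol{I}_e$ random variables across distinct edges are independent, generating them independently on a need-to-know basis preserves the joint distribution. Finally, the last step retaining only vertices within distance $\leq T$ from $v$ in the weighted subgraph $G^*$ is precisely the definition of $T$-reachability, so the returned set equals $\RR_{\+G_\TSIR}(v)$ in distribution.

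For the running time of a single call, each in-neighbor $w$ of a processed vertex $u$ triggers at most $T$ coin flips to determine $\~{RecRound}_w$ (only the first time $w$ is encountered) and at most $T$ additional flips to determine $w_{w,u}$, for a total of $O(T)$ work per edge examined. The edges examined form a subset of the edges incident to $R=\RR_\TSIR$, so the sampling part costs $O(T\cdot w(R))$. The final Dijkstra call on the weighted graph $G^*$, which has at most $w(R)$ edges and $|R|\leq n$ vertices, contributes $O(w(R)+|R|\log n)=O(w(R)\log n)$. Hence each RR set is produced in $\alpha\cdot w(R)$ time for $\alpha=O(T\log n)$.

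Plugging $\alpha=O(T\log n)$ into \Cref{lem-correctness-IMM} yields the $(1-1/e-\varepsilon)$-approximation guarantee with success probability at least $1-n^{-\ell}$, and a total running time of $O\bigl(T\cdot (k+\ell)\cdot (n+m)\cdot \log^2 n/\varepsilon^2\bigr)$, matching the theorem. The main subtlety I expect is in step (i): one must be careful that laziness in sampling $\~{RecRound}_w$ and truncating at $T$ does not bias the joint distribution, which is where the coupling between the lazy revelation and the full live-edge graph must be made explicit; once that invariant is nailed down, the rest is a direct application of the $\IMM$ machinery.
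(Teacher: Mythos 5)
Your proposal is correct and takes essentially the same route as the paper: argue that the lazy, need-to-know sampler faithfully simulates the $\TSIR$ live-edge graph (recovery round drawn once per vertex and reused across its out-edges, truncation at $T$ harmless since spans exceeding $T$ cannot lie on a $T$-reachable path), bound the per-sample cost by $O(T)$ coin flips per examined edge plus a Dijkstra step of $O(w(R)\log n)$, and plug $\alpha=O(T\log n)$ into \Cref{lem-correctness-IMM}. Your treatment of the correctness invariant is in fact somewhat more explicit than the paper's, which asserts faithfulness of the simulation without the induction you sketch.
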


\section{Ablation Experiment}
\label{sec:experiments}

Our experiments consist of two parts. The first part compares the spreading influence of $S$ under $\IC$, $\SIR$, and $\TSIR$, with setting the parameter $p_{u,v}$ in $\IC$ and the infection rate $\beta_{u,v}$ and recover rate $\gamma_v$ in $\SIR$ and $\TSIR$ satisfying \Cref{eqn:IC-SIR}:
\begin{align*}
    p_{u,v} = \sum_{t=1}^{\infty} \gamma_u (1-\gamma_u)^{t-1} (1- (1-\beta_{u,v})^t).
\end{align*}

In the second part of our experiments, we compared the performance (including the seed qualities and the running times) of our proposed algorithm $\SIRIMM$ with four baseline methods, including $\CELF$, $\IMM$, $\DegDis$, and the degree centrality algorithm, under the $\SIR$ model.
The same experiments are performed for the $\TSIR$ model as well.

\subsection{The Baseline Methods}
The baseline methods we applied including $\IMM$~\citep{TSX2015}, $\CELF$~\citep{LKGFVG2007}, $\DegDis$~\citep{WYS09}, and, degree centrality algorithms \citep{Freeman1978}. In this subsection, we will provide detailed descriptions of each baseline method. The details of $\IMM$ have already been described in \Cref{sec:SIRAlgorithms}. 

\subsubsection{The Degree Discount IC}
The Degree Discount heuristic ($\DegDis$)~\citep{WYS09} is a popular algorithm used for identifying influential nodes in a network. The $\DegDis$ heuristic works by iteratively selecting the node with the highest degree and then removing it from the graph. This algorithm has been referred to as the ``single degree discount heuristic'' in previous literature, and we will adopt the name $\DegDis$ in this study. Another variant of this heuristic, known as ``Degree Discount IC" ($\DegDis$ IC), was specifically designed by Chen et al.~\citep{CPL2012} for the \emph{Uniform Independent Cascade model}, a special case of the $\IC$ model where the activation probability is the same for all edges. 

In $\DegDis$ IC, the score of a candidate seed is computed by subtracting the number of its neighbors that have already been selected as seeds from its degree. This approach discounts the edges connecting the candidate seed to the existing seeds since they do not play a role in further infections. However, when considering the uniform independent cascade model, especially with a small parameter $p$, discounting by $1$ can be inaccurate. To address this issue, the $\DegDis$ IC heuristic uses a more precise estimation of the \Cref{alg:DegreeDiscount}. For more details on this heuristic, readers can refer to reference~\citep{CPL2012}.

\begin{algorithm}[!h] 
\BlankLine
\KwIn{A directed graph $ G=(V,E) $ with live-edge diffusion model $\+G$, the number of spreaders $k$, and the parameter$ p$.}
\KwOut{A set $S$ including $l$ influential nodes.}
Initialize $S = \emptyset$; \\
\For {$v \in V$}
{
    $\~{dd}_v = d_v$;  // $d_v$ is the degree of $v$\\
    Initialize $t_v = 0$;\\
}
\For{$i=1$ to $k$}
{
    Select $u = \mathop{\arg\max}_{v \in V}\{ \~{dd}_v \, |\, v \in V\setminus S \}$;\\
    $S = S \cup {u}$;\\
    \For{each neighbor $v$ of $u$ and $v \in V \setminus S$}
    {
        $t_v = t_v + 1$;\\
        $\~{dd}_v = d_v - 2 \cdot t_v - (d_v - t_v)\cdot t_v\cdot p$;
    }
}
\Return{$ S $};
\caption{$\!{Degree Discount}\left( G, k\right)$}
\label{alg:DegreeDiscount}
\end{algorithm}

\subsubsection{The Degree Centrality}
The degree centrality method~\citep{Freeman1978} simply chooses $k$ nodes with the largest degrees as the seeds.

\subsubsection{$\CELF$}

$\CELF$ is an improved variant of the standard greedy algorithm that is 700 times faster~\citep{LKGFVG2007}.
In each iteration of the standard greedy algorithm, all nodes in the graph are tested and the node with the highest marginal influence is selected.
In each iteration of $\CELF$, by further exploiting submodularity of the cascade process, it identifies some nodes whose marginal influences are clearly sub-optimal and excludes these nodes from being tested.
The core observation is that the marginal influence of each node in the current iteration will only be smaller than or equal to it is in the previous iteration.
The algorithm's details are shown in \Cref{alg:CELF}.

\begin{algorithm}[!h] 
\BlankLine

$\mathcal{A} \gets \!{LazyForward} \left( \+G = \left(V,E \right), R, c, B, \~{UC} \right) $\;
$\mathcal{A} \gets \!{LazyForward} \left( \+G = \left(V,E \right), R, c, B, \~{CB} \right) $\;

\Return{$ \operatorname{R}\left( \mathcal{A}_{\~{UC}},\mathcal{A}_{\~{CB}} \right) $}
\caption{$\CELF$$\left( G = \left(V,E \right), R, c, B, type \right)$ }
\label{alg:CELF}
\end{algorithm}

\begin{algorithm}[!htb]
\small{
\BlankLine

$ \mathcal{A} \gets \emptyset $\;
\ForEach{$ s \in V $} 
{
    $\delta_s \gets +\infty$\;
	\While{$ \exists s \in V \setminus \mathcal{A}:c \left( \mathcal{A} \cup \{s\} \right)$}
     {
         \ForEach{$ s \in V \setminus \mathcal{A}  $}  {$cur_s \gets \!{False}$\;}
         \While{$\!{True}$}
         {
          \If{$type = \~{UC}$}{$ s^* \gets \mathop{\arg\max}\limits_{s \in V \setminus \mathcal{A},c \left( \mathcal{A} \cup \{s\} \right) \leq B}\delta_s $\;}

         \If{$type = \~{CB}$}{$ s^* \gets \mathop{\arg\max}\limits_{s \in V \setminus \mathcal{A},c \left( \mathcal{A} \cup \{s\} \right) \leq B}\delta_s / c(s)$\;}
         \If{$cur_S$}{$\mathcal{A}\gets \mathcal{A} \cup s^*$\;
         Break\;}
         \Else {$\delta_s \gets R \left( \mathcal{A} \cup \{s\} - R(\mathcal{A}) \right)$\;$cur_s \gets \!{True}$\; }
         }
     }
}
\Return{$ \mathcal{A} $}\;
\caption{$ \!{LazyForward} \left( G = \left(V,E \right), R, c, B, type \right) $}
\label{alg:LazyForward}
}
\end{algorithm} 

\subsection{Results}
All experiments are implemented in C++ and conducted on an Ubuntu $18.04.4$ machine with Intel(R) Xeon(R) Gold $6240$C CPU $@2.60$GHz and $251$ GB RAM. 
We conducted the three sets of experiments on four real-world networks whose statistics are presented in Table~\ref{tab:nets}. 

\begin{table}[!htb]
\centering
\caption{The basic structural characteristics of the $4$ networks. Here, $N$ and $M$ are the numbers of nodes and edges, respectively. 
\label{tab:nets}
}
\begin{tabular}{lrrrr}
\hline\hline
Network & $ N \quad $ & $ M \quad $ & Average Degree & Maximum Degree \\
\hline
CA-GrQc	& 5,242	&  14,490 & 5.53	& 81 \\
DBLP & 317081 &  1,049,866&  6.62	& 306 \\
Nethept	& 15,233 & 31,387 & 4.12& 64  \\
com-YouTube	& 1,134,890	& 2,987,624& 5.26 & 28576\\
\hline \hline
\end{tabular}
\end{table}

\subsubsection{Comparing the spreads of $\IC$, $\SIR$, and $\TSIR$}
\label{sec:empiricalIC-SIR}
In this section, we validate that in the graph $G$, given the same seeds $S$ ($S$ is obtained by $\IMM$ algorithm with $\epsilon=0.5$ and $\ell=1$ ), the influence spread of $S$ in $\IC$ dominates its counterpart in $\SIR$ under corresponding parameters satisfied \Cref{eqn:IC-SIR}. 
Here, we set $p\in\{0.01,0.05,0.1,0.3,0.5\}$ for $\IC$, the recovery rate is set to $0.8$ in $\SIR$ and $\TSIR$, and the infected rate is calculated according to \Cref{eqn:IC-SIR}. Specifically, the influence spread of $S$ is measured by the number of nodes that end up in the \emph{active} state in $\IC$, and the number of nodes that end up in the \emph{recovered} states in $\SIR$ after the propagation process. 
For $\TSIR$, $T$ is set to $T=100$.
For a given seed set, the expected spread is measured by 10000 times of Monte-Carlo simulations. The only exception is that of com-YouTube where we set the number of Monte Carlo as $1000$, due to the graph's large scale.

\begin{figure}[!h] 
    \centering  
    \begin{minipage}{0.49\linewidth}
        \centering
        \subfigure[CA-GrQc]{
            \label{CA-GrQcICandSIR}
            \includegraphics[width=\textwidth]{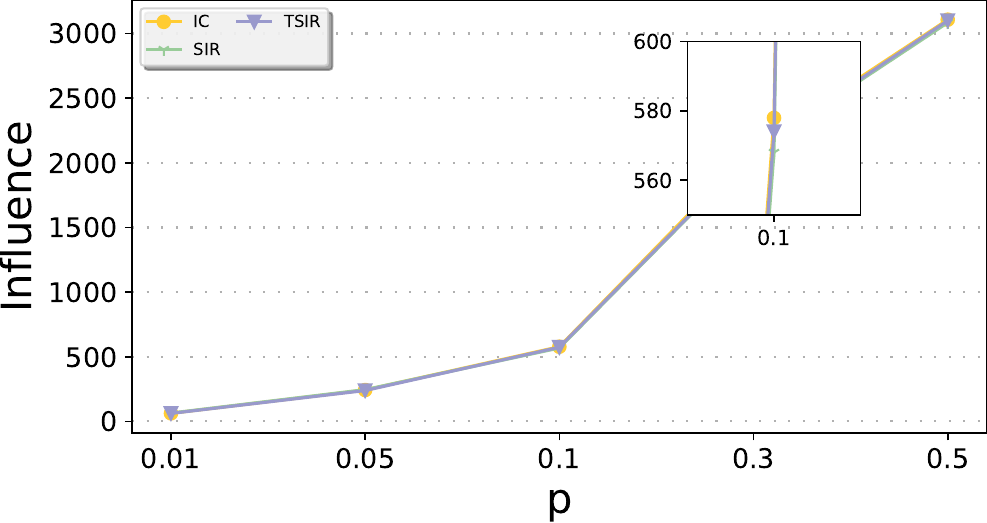}}
    \end{minipage}
    \begin{minipage}{0.49\linewidth}
        \centering
        \subfigure[DBLP]{
            \label{DBLPICandSIR}
            \includegraphics[width=\textwidth]{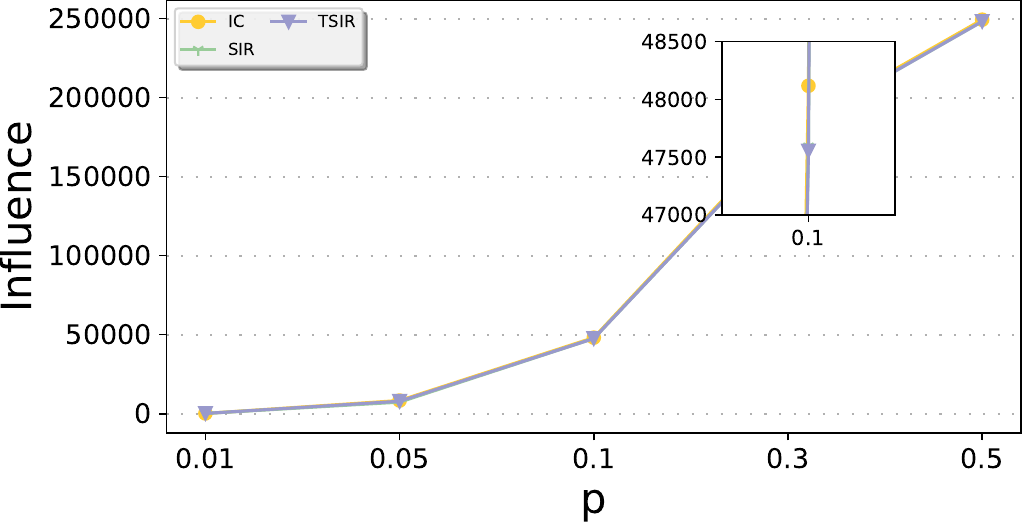}}
    \end{minipage}
    \begin{minipage}{0.49\linewidth}
        \centering
        \subfigure[Nethept]{
            \label{NetheptICAndSIR}
            \includegraphics[width=\textwidth]{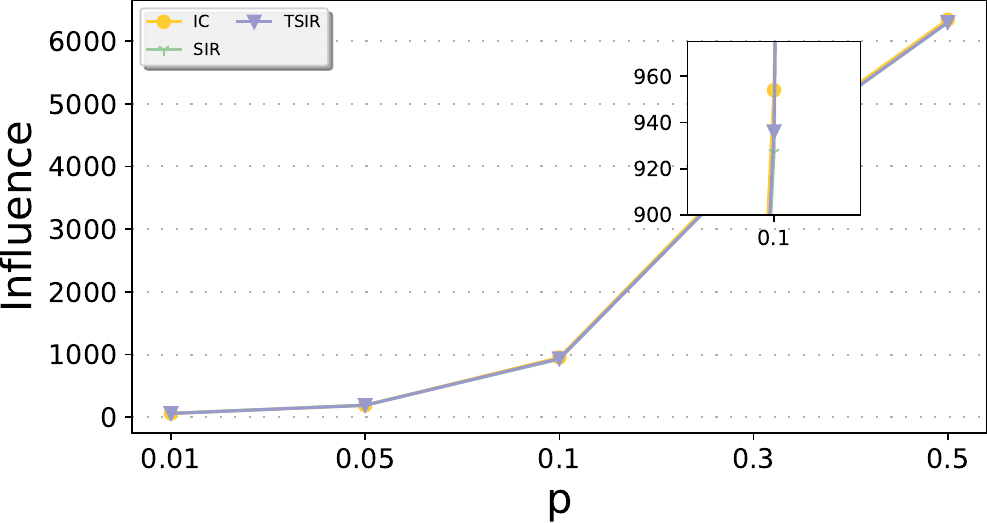}}
    \end{minipage}
    \begin{minipage}{0.49\linewidth}
        \centering
        \subfigure[com-YouTube]{
            \label{com-youtubeICandSIR}
            \includegraphics[width=\textwidth]{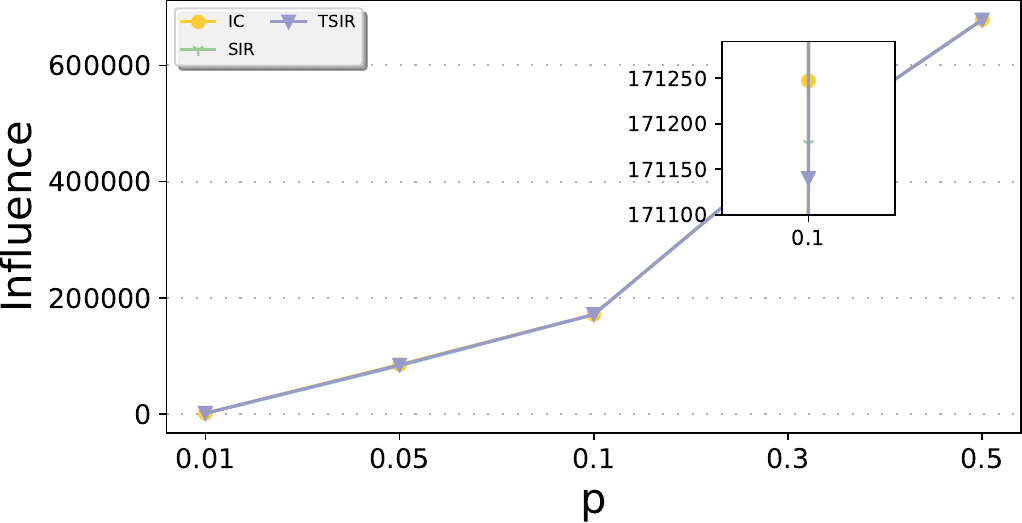}}
    \end{minipage}
    \caption{The performances of the same $S$ under $\IC$ and $\SIR$, with a set of matching parameters $p$, $\beta$ and $\gamma$, respectively. Due to the significant overall fluctuations in the data, we use an inset plot to highlight the changes within a specific range. Without loss of generality, we show the specific fluctuation of Influence on the point when $p=0.1$.}
	\label{fig: ICAndSIR}
\end{figure}

Figure~\ref{fig: ICAndSIR} demonstrates the influence spread of $S$ under $\IC$, $\SIR$, and $\TSIR$ across the different values of $p$. As expected, the result shows that the influence of $S$ on $\IC$ dominates $\SIR$ and $\TSIR$ in real-world datasets, but by not a very large margin in general. As analyzed in \Cref{section:IC-and-SIR}, the positive correlation in $\SIR$ is the key factor causing this dominance. It is evident that in networks with dense edges, such as DBLP, this dominance is pronounced due to the significant impact of positive correlation in the $\SIR$ model.

\subsubsection{Comparison of Algorithms by Seeds Qualities}
In this section, we compare our proposed algorithms $\SIRIMM$ and $\TSIRIMM$ with the four baseline algorithms $\CELF$, $\DegDis$, $\IMM$ (by treating the $\SIR$/$\TSIR$ diffusion model as the $\IC$ model according to \Cref{eqn:IC-SIR}), and degree centrality.
For the diffusion model parameters, we set $p = 0.1$, $\gamma = 0.8$, and $\beta$ is calculated according to \Cref{eqn:IC-SIR}.
The parameters we used for these algorithms are shown in \Cref{tab:parameters}.

\begin{table}[h]
\centering
\caption{Parameters for baseline approach. \label{tab:parameters}}
\small
\begin{tabular}{lcc}
\hline\hline
Algorithms & Parameter & Value \\
\hline
\hline
CELF & MC Simulations & 10000 \\
\hline
Degree Discount & $p$	& 0.1 \\
\hline
$\IMM$ & 
    \begin{tabular}{c}
         $\epsilon$  \\
         $\ell$ 
    \end{tabular}
& \begin{tabular}{c}
         $0.5$  \\
         $1$ 
    \end{tabular} \\
\hline
$\SIRIMM$	& \begin{tabular}{c}
         $\epsilon$  \\
         $\ell$ 
    \end{tabular}
& \begin{tabular}{c}
         $0.5$  \\
         $1$ 
    \end{tabular} \\
\hline
$\TSIRIMM$ & \begin{tabular}{c}
         $\epsilon$  \\
         $\ell$ 
    \end{tabular}
& \begin{tabular}{c}
         $0.5$  \\
         $1$ 
    \end{tabular} \\

\hline \hline
\end{tabular}
\end{table}

We proceed to evaluate their performance by benchmarking their qualities. In our specific problem, quality refers to the expected spread generated by the $k$ influential seeds identified by each algorithm, and the expected spread is measured by 10000 times of Monte-Carlo simulations. The only exception is that of com-YouTube where we set the number of Monte Carlo as $1000$, due to the graph's large scale.

\paragraph{Compare $\SIRIMM$ with baselines.} 
Here we focus on the performance of seed in the $\SIR$ model by algorithms includes $\SIRIMM$, $\DegDis$, $\IMM$, degree centrality, and $\CELF$.

\begin{figure}[!h] 
    \centering  
    \begin{minipage}{0.49\linewidth}
        \centering
        \subfigure[CA-GrQc]{
            \includegraphics[width=\textwidth]{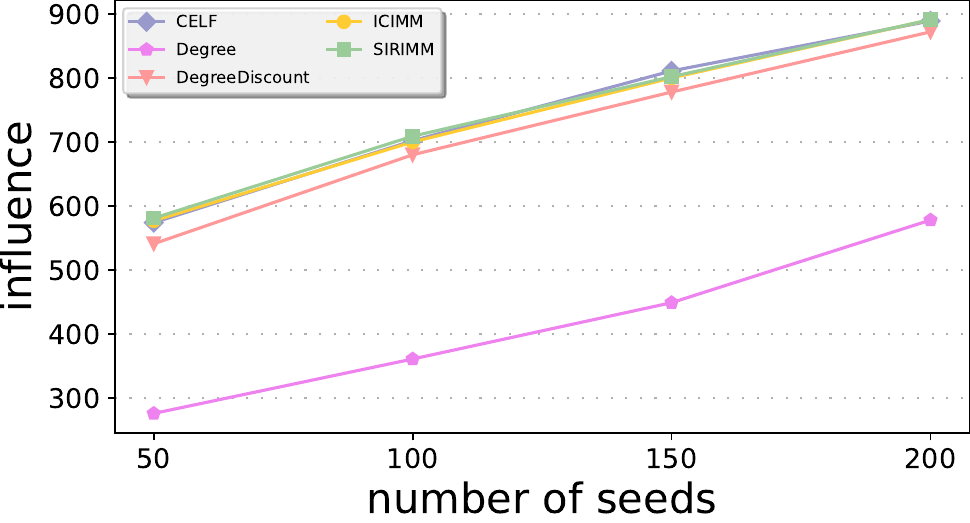}}
    \end{minipage}
    \begin{minipage}{0.49\linewidth}
        \centering
        \subfigure[DBLP]{
            \includegraphics[width=\textwidth]{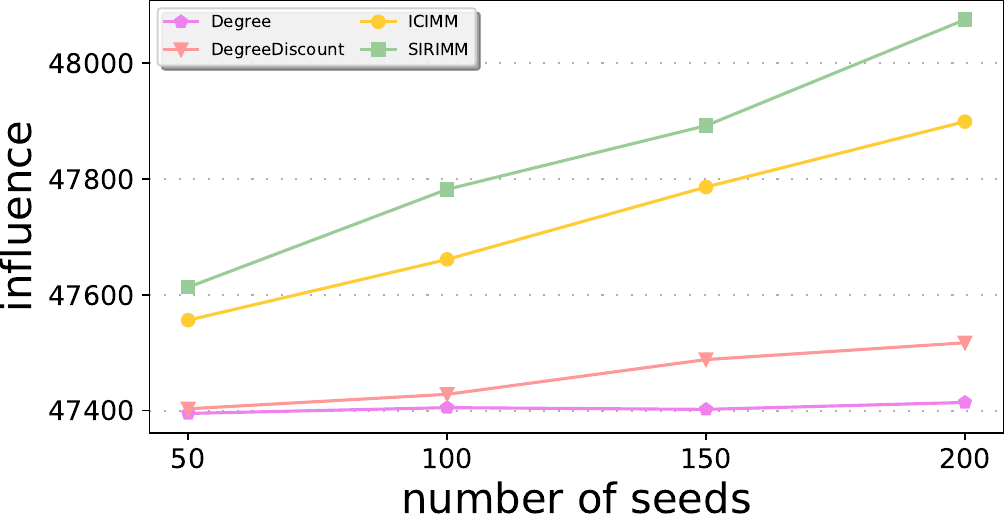}}
    \end{minipage}
    \begin{minipage}{0.49\linewidth}
        \centering
        \subfigure[Nethept]{
            \includegraphics[width=\textwidth]{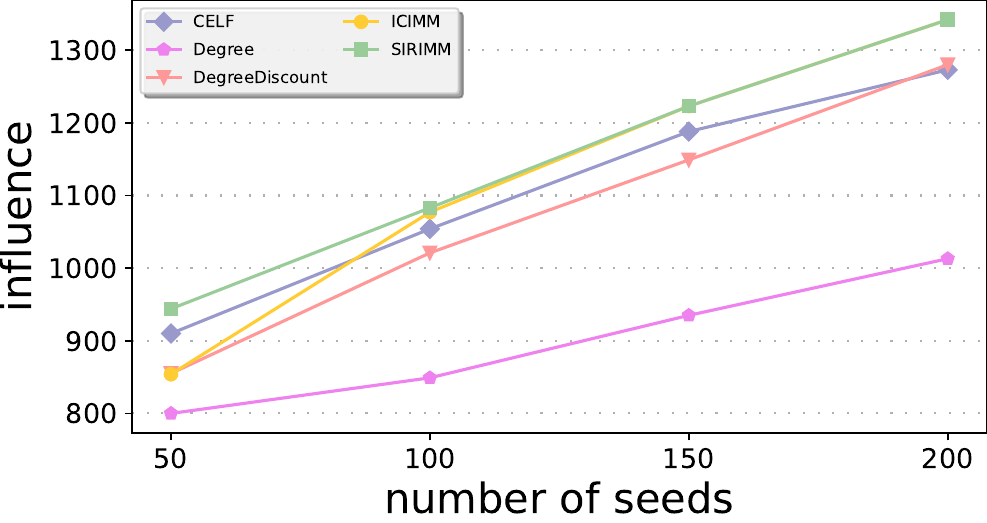}}
    \end{minipage}
    \begin{minipage}{0.49\linewidth}
        \centering
        \subfigure[com-YouTube]{
            \includegraphics[width=\textwidth]{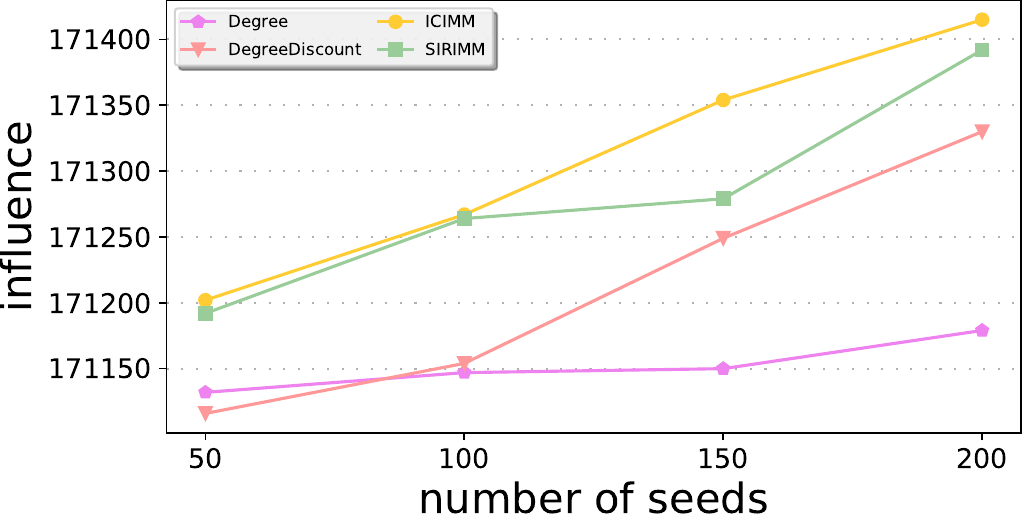}}
    \end{minipage}
    \caption{Comparing the influence spread of $\SIRIMM$ with baseline methods.}
    \label{fig:influence}
\end{figure}

As shown in~\Cref{fig:influence}, the influence spread of $\SIRIMM$ is higher than that of the baseline method in most cases. 
Although $\CELF$ and $\IMM$ are designed to greedily select the influential seeds, both of them are assumed based on the $\IC$ model, specifically, both of them try to approximate the result of $\SIR$ based on the $\IC$ model. Conversely, $\SIRIMM$ is designated tailored to the $\SIR$ model; naturally, the performance of $\SIRIMM$ is higher than its counterpart of $\CELF$ and $\IMM$. Notice that, this difference is slight,
the reason is akin to the analysis in \Cref{sec:empiricalIC-SIR}. As for $\DegDis$ and degree centrality, both of them are proposed to select seeds in a general way and have strong scalability, but failed to target on specific diffusion models.


\paragraph{Compare $\TSIRIMM$ with baselines.}  Here we focus on the performance of seed selection in the $\TSIR$ model by algorithms including $\DegDis$, $\IMM$, degree centrality, and $\CELF$.

\begin{figure}[!h] 
    \centering  
    \begin{minipage}{0.49\linewidth}
        \centering
        \subfigure[CA-GrQc]{
            \includegraphics[width=\textwidth]{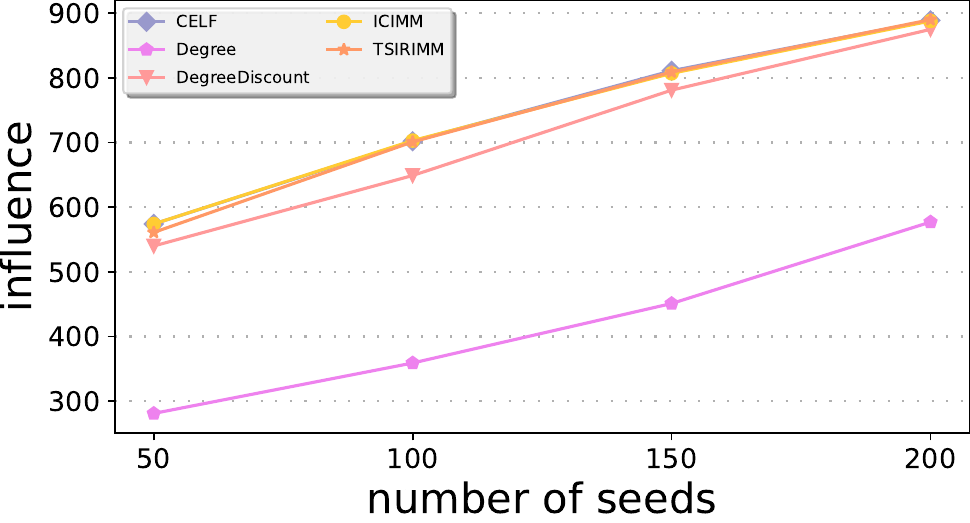}}
    \end{minipage}
    \begin{minipage}{0.49\linewidth}
        \centering
        \subfigure[DBLP]{
            \includegraphics[width=\textwidth]{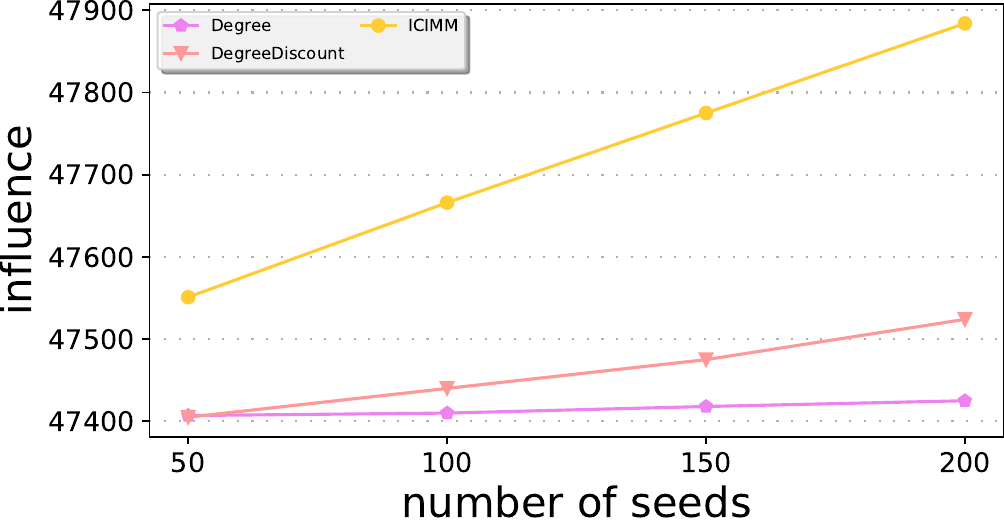}}
    \end{minipage}
    \begin{minipage}{0.49\linewidth}
        \centering
        \subfigure[Nethept]{
            \includegraphics[width=\textwidth]{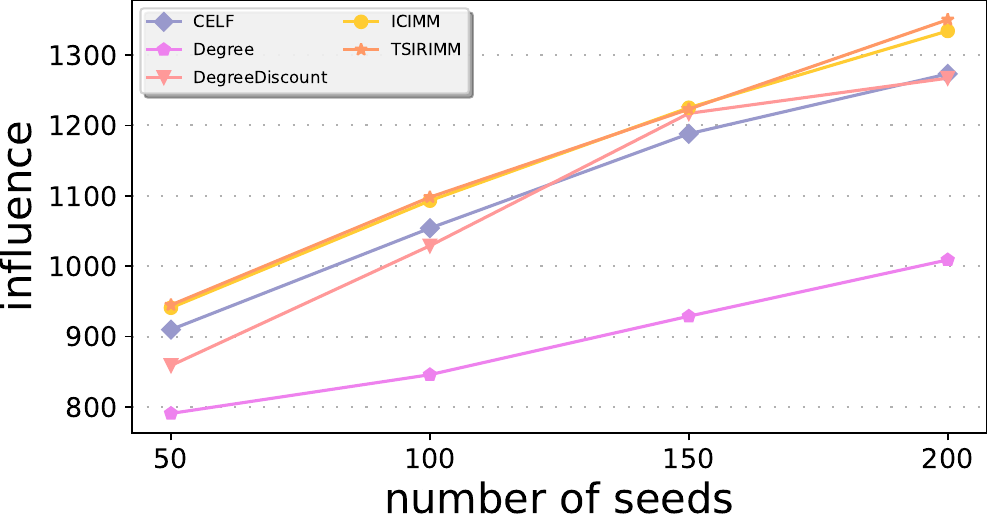}}
    \end{minipage}
    \begin{minipage}{0.49\linewidth}
        \centering
        \subfigure[com-YouTube]{
            \includegraphics[width=\textwidth]{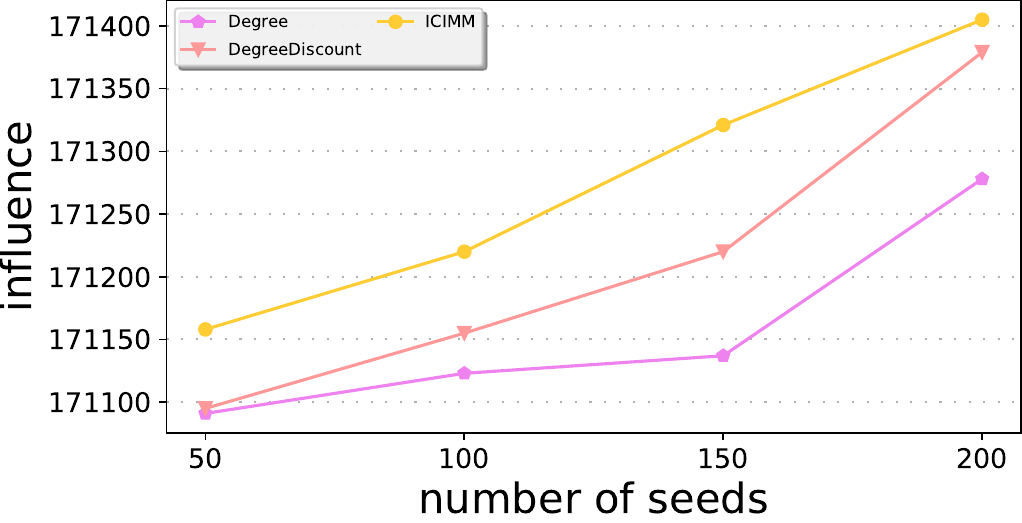}}
    \end{minipage}
    \caption{Comparing the influence spread of $\TSIRIMM$ with baseline methods.}
    \label{fig:influenceOfTSIR}
\end{figure}

As shown in~\Cref{fig:influenceOfTSIR}, the influence spread of $\TSIRIMM$ is higher than that of the baseline method in CA-GrQc and Nethept based on the $\TSIR$ model, which is because $\TSIRIMM$ is specifically designed based on $\TSIR$ model while the others are not.
However, due to the extra complications of computing shortest paths in the $\RR$ set sampling process (caused by the time-dependency feature of the $\TSIR$ model), $\TSIRIMM$ has a much lower scalability. In particular, it fails to scale to DBLP and com-YouTube (the algorithm did not terminate after more than 48 hours). The algorithm $\CELF$ also fails to scale to DBLP and com-Youtube as the Monte-Carlo sampling used by $\CELF$ is generally slower than $\RR$ sets sampling.


\subsubsection{Comparison of Algorithms by Running Times}

In this analysis, we investigate the running time of our algorithms ($\SIRIMM$ and $\TSIRIMM$), greedy method $\CELF$, $\DegDis$, $\IMM$, and degree centrality.
The parameter $T$ for the $\TSIR$ model (and thus the algorithm $\TSIRIMM$) is set to $T=50$.

\begin{figure}[!h] 
    \centering  
    \vspace{-0.35cm} 
    \subfigtopskip=2pt 
    \subfigbottomskip=2pt 
    \subfigcapskip=-5pt 
    \begin{minipage}{0.49\linewidth}
        \centering
        \subfigure[CA-GrQc]{
            \includegraphics[width=\textwidth]{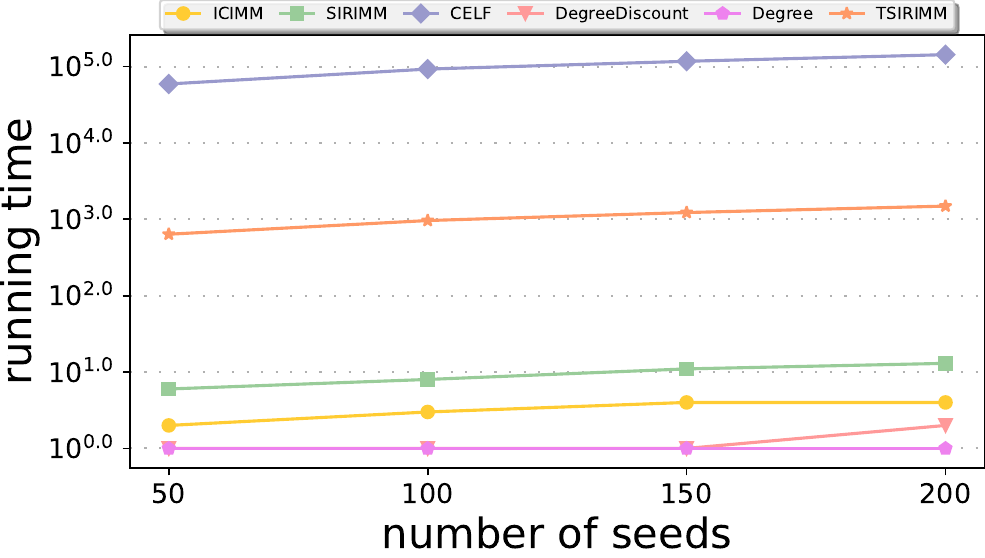}}
    \end{minipage}
    \begin{minipage}{0.49\linewidth}
        \centering
        \subfigure[DBLP]{
            \includegraphics[width=\textwidth]{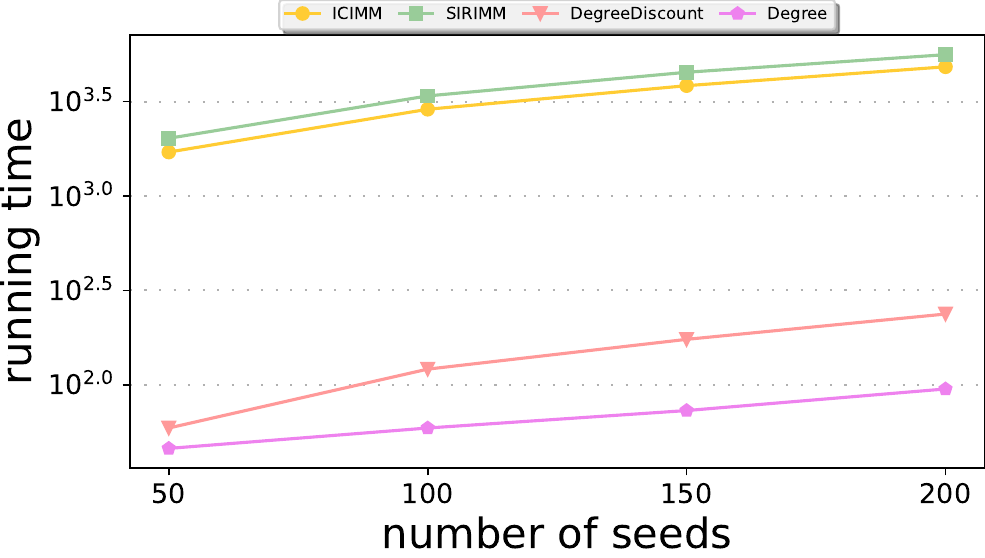}}
    \end{minipage}
    \begin{minipage}{0.49\linewidth}
        \centering
        \subfigure[Nethept]{
            \includegraphics[width=\textwidth]{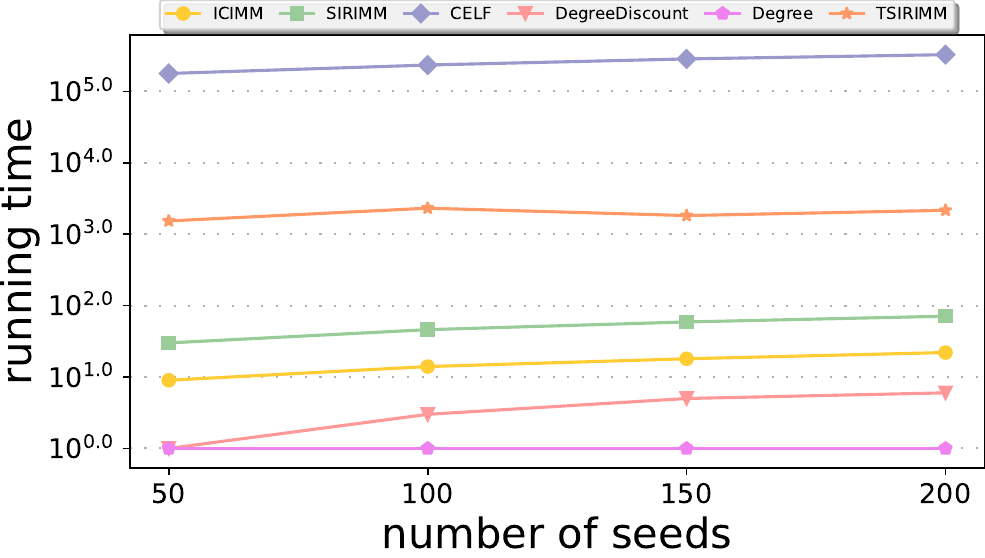}}
    \end{minipage}
    \begin{minipage}{0.49\linewidth}
        \centering
        \subfigure[com-YouTube]{
            \includegraphics[width=\textwidth]{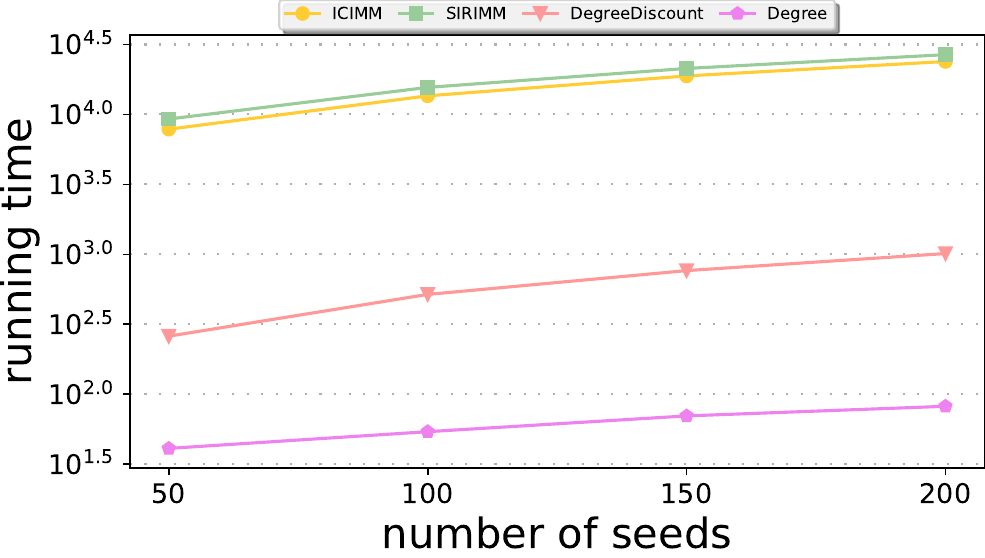}}
    \end{minipage}
    \caption{Comparing the scalablity of $\SIRIMM$, $\TSIRIMM$ with greedy and heuristics.}
    \label{fig:running-time}
\end{figure}
As shown in~\Cref{fig:running-time}, the running time of our proposed $\TSIRIMM$ and $\SIRIMM$ and the baseline methods are compared. As mentioned before, the running times of $\TSIRIMM$ and $\CELF$ on DBLP and com-YouTube are missing. 

In CA-GrQc and Nethept, the running time of $\CELF$ is significantly higher than that of the other methods ($\IMM$, $\SIRIMM$, $\DegDis$, and degree centrality).  
As expected, $\TSIRIMM$ also has a higher running time than the other methods. However, the gap between $\TSIRIMM$ and the other methods is smaller than the gap between $\CELF$ and the other methods. In DBLP and com-YouTube, the running time of $\SIRIMM$ is slightly higher than the remaining baselines, which is because of the $\RR$ set sampling process in the $\SIRIMM$ method.

\section{Conclusion}
In this paper, we have explored the similarities and differences between the two diffusion models: $\IC$ and $\SIR$.
Prior work has observed that the $\SIR$ model can be ``approximately'' viewed by the $\IC$ model when the $\IC$ parameter of each edge $(u,v)$ is viewed as an aggregated probability that $u$ eventually infects $v$ in the $\SIR$ model (see \Cref{eqn:IC-SIR}).
However, if viewed in this way, there is a dependency on the outgoing edges of every vertex in the $\SIR$ model.
In this paper, we have performed an in-deep analysis on this dependency both theoretically and empirically. 
We have proved that this extra dependency always harms the spread of influence.
As a result, under matching parameters, given the same seed set, the influence spread under the $\IC$ model dominates that under the $\SIR$ model.
We have also shown that this domination can be significant in the theoretical worst-case analysis, while it is less significant in our empirical experiments.

By exploiting some of our proof techniques, we adapt the $\IMM$ algorithm to the $\InfMax$ problem under the $\SIR$ diffusion model and the $\TSIR$ model (which is a time-dependent variant of the $\SIR$ model).
To the best of our knowledge, this is the first $\InfMax$ algorithm for the $\SIR$ model with the $(1-1/e)$ theoretical guarantee.

\section*{Acknowledgments}
The research of Biaoshuai Tao was supported by the National Natural Science Foundation of China (No. 62102252). The research of Kuan Yang was supported by the NSFC grant No. 62102253.

\bibliographystyle{plainnat}  
\bibliography{ref}  

\newpage
\appendix
\section{Omitted Proofs}
\label{sec:appendix}
\liveedgeSIR*
\begin{proof}
    The proof is just the rephrase of the stochastic diffusion process. We drive the diffusion process by the collection of random variables $\set{\boldsymbol{R}_v}_{v\in V},\set{\boldsymbol{I}_e}_{e\in E}$ that generate the live-edge graph as mentioned in \Cref{subsec:live-edge-SIR}. 
    
    The diffusion process can be driven by $\set{\boldsymbol{R}_v}_{v\in V},\set{\boldsymbol{I}_e}_{e\in E}$ as follows.
    Initially, each node $v\in V$ maintains an index $j_v=1$, and the nodes in the seed set $S$ are labeled as infected at timestamp $0$. At each timestamp $t=1,2,\dots$, each infected node $u$ from the previous timestamp $t-1$ performs the following operations sequentially:
     \begin{itemize}
         \item for each of its susceptible outgoing neighbors $v$, $u$ infects $v$ if $I_{(u,v),j_u}=1$;
         \item $u$ gets recovered if $R_{u,j_u}=1$ and remains infected otherwise;
         \item $j_u\gets j_u+1$.
     \end{itemize}
     Examining the randomness for any infected node $u$ in the above process, one can verify that there exists a path in $\+G_{\SIR}$ from $S$ to $u$.
     Conversely, regarding the randomness for any path from $S$ to the node $u$ in $\+G_{\SIR}$, we can recover the diffusion process in which $u$ gets infected.
\end{proof}

\NegativeCorrelatde*
\begin{proof}
   Suppose that $e$ is an outgoing edge of $u$ in the underlying graph $G$, and let $E'_u$ be the collection of outgoing edges of a node $u$ in the set $E'$. If $E'_u=\emptyset$, then by definition we have 
   $$\Pr{e\in \+G_{\SIR} \mid E'\cap \+G_{\SIR}=\emptyset} = \Pr{e\in \+G_{\SIR}}.$$ Therefore, it suffices to consider the case when $E'_u\neq \emptyset$. Note that
    \begin{align*}
        \Pr{e\in \+G_{\SIR} \mid E'\cap \+G_{\SIR}=\emptyset}=\Pr{e\in \+G_{\SIR} \mid E'_u \cap \+G_{\SIR}=\emptyset)}=\frac{\Pr{e\in \+G_{\SIR} , E'_u \cap \+G_{\SIR}=\emptyset)}}{\Pr{E'_u \cap \+G_{\SIR}=\emptyset}}.
    \end{align*}  In the following, we prove 
    \begin{align}
    \label{eqn:correlation-inequality}
    \Pr{e\in \+G_{\SIR} , E'_u \cap \+G_{\SIR}=\emptyset)}\leq  \Pr{e\in \+G_{\SIR}}\cdot \Pr{E'_u\cap \+G_{\SIR}=\emptyset}
    \end{align}by the coupling argument, which implies \Cref{lem:negative-correlated} immediately. Specifically, we introduce a unified probability space and demonstrate that the aforementioned terms are equivalent to the probability of specific events that can be defined in this space.

   Let $\set{\boldsymbol{R}_v}_{v\in V}$, $\set{\boldsymbol{R}'_v}_{v\in V}$ and $ \set{\boldsymbol{I}_f}_{f\in E}$ be collections of independent random variables as mentioned in~\Cref{sect:liveandRRset}. We define the graphs $\+G_1$ and $\+G_2$ as: 
   \[
        \+G_1:= \+G_{\SIR}(\set{\boldsymbol{R}_v}_{v\in V},\set{\boldsymbol{I}_f}_{f\in E}), \ \
        \+G_2:= \+G_{\SIR}(\set{\boldsymbol{R}'_v}_{v\in V},\set{\boldsymbol{I}_f}_{f\in E}).
    \] We can then observe that \Cref{eqn:correlation-inequality} is equivalent to 
    \begin{align*}
         \Pr{e\in \+G_1 , E'_u\cap \+G_1=\emptyset }\leq \Pr{ e\in \+G_1,E'_u \cap \+G_2=\emptyset},
    \end{align*}
    since the events $\set{e\in \+G_1}$ and $\set{E'_u\cap \+G_2=\emptyset}$ are independent.
    
    For each edge $f\in E$, let $T_{f}$ be the first time that $I_{f,T_{f}}=1$, i.e., $I_{f,T_{f}}=1$ and $I_{f,[T_{f}-1]}$ is a sequence of $0$ with length $T_{f}-1$. We also call $T_{f}$ as the infection time of edge $f\in E$. The event $\set{e\in \+G_1 , E'_u\cap \+G_1=\emptyset}$ happens only if the infection time of $e$ is earlier than the infection time of any other edges in $E'_u$, i.e., $T_e<T_{f} $ for any $f\in E'_u$. For simplicity, let
    $$\+E_1:= \set{e\in \+G_1 , E'_u\cap \+G_1=\emptyset ,T_e< T_{f}, \forall f\in E'_u},$$
    and
    $$\+E_2:= \set{e\in \+G_1 , E'_u\cap \+G_2=\emptyset ,T_e< T_{f}, \forall f\in E'_u}.$$  Observe that
    \begin{align*}
         \Pr{e\in \+G_1 , E'_u\cap \+G_1=\emptyset}= \Pr{\+E_1},\ \
        \Pr{\+E_2}\leq \Pr{ e\in \+G_1 , E'_u\cap \+G_2=\emptyset}.
    \end{align*}
    Let $T^*:=\min_{f\in E'_u} T_{f}$ and 
    \begin{align*}
        &\!{P}_{1}(t_1,t_2):= \Pr{e\in \+G_1 , E'_u\cap \+G_1=\emptyset ,T_e=t_1,T^*=t_2},\\
        &\!{P}_{2}(t_1,t_2):= \Pr{ e\in \+G_1 , E'_u\cap \+G_2=\emptyset,T_e=t_1,T^*=t_2},
    \end{align*} for any positive integer $1\leq t_1< t_2$.
    It boils down to show $\!{P}_{1}(t_1,t_2)\leq   \!{P}_{2}(t_1,t_2)$ since 
    \begin{align*}
        \Pr{\+E_1}=\sum_{t_1=1}^{\infty} \sum_{t_2=t_1+1}^{\infty} \!{P}_{1}(t_1,t_2)\quad\mbox{ and }\quad
        \Pr{ \+E_2}=\sum_{t_1=1}^{\infty} \sum_{t_2=t_1+1}^{\infty} \!{P}_{2}(t_1,t_2).
    \end{align*}
    By definition, we have
    \begin{align*}
          &\quad \!{P}_{1}(t_1,t_2)=\sum_{t=t_1}^{t_2-1}(1-\gamma_v)^{t-1}\cdot \gamma_v\cdot \Pr{T_e=t_1,T^*=t_2}.
    \end{align*} Meanwhile, we have
    \begin{align*}
         \!{P}_2(t_1,t_2)&=\Pr{ e\in \+G_1 , E'_u\cap \+G_2=\emptyset,T_e=t_1,T^*=t_2}\\
         &=\sum_{t=t_1}^{\infty}\sum_{t'=1}^{t_2-1}(1-\gamma_v)^{t+t'-2}\cdot \gamma_v^2\cdot \Pr{T_e=t_1,T^*=t_2}\notag \\
         &=\sum_{t'=1}^{t_2-1}(1-\gamma_v)^{t_1+t'-2}\cdot \gamma_v\cdot \Pr{T_e=t_1,T^*=t_2}\notag\\
         &\geq  \sum_{t=t_1}^{t_2-1}(1-\gamma_v)^{t-1}\cdot \gamma_v\cdot \Pr{T_e=t_1,T^*=t_2}= \!{P}_{1}(t_1,t_2).
    \end{align*}

    Combining all these facts, the proof is complete.
\end{proof}

\ICDominateSIRGreatly*
\begin{proof}
    Since the parameters satisfying \Cref{eqn:IC-SIR}, we can calculate the $\IC$ parameter $p$ for the dashed edges as follows 
    \begin{align}\label{eqn:equalinfluencerate}
        p=\sum_{t=1}^{\infty}(1-\gamma )^{(t-1)}\cdot \gamma \cdot (1-(1-\beta)^t)=1-\frac{\gamma(1-\beta)}{\gamma+\beta-\gamma\cdot\beta},
    \end{align}
    where the last equality follows from standard computations of geometric series.
    
    Notice that, by our definition of solid edges, if at least one of the in-neighbors of $u$ is infected, $u$ will be infected with probability $1$, and this is true under both models.
    Let $p_1(b)$ be the probability that the node $u$ gets infected in $\IC$ and $p_2(b)$ be that in $\SIR$. One can easily verify that 
    \[
        p_1(b)=1-(1-p)^{b}=
        \notag
        \\1-\left(\frac{\gamma\cdot (1-\beta)}{\gamma+\beta-\gamma\cdot \beta}\right)^b
    \] according to \Cref{eqn:equalinfluencerate}, and
    \begin{align*}
        p_2(b)&=1-\sum_{t=1}^{\infty}(1-\gamma )^{(t-1)}\cdot \gamma\cdot  (1-(1-\beta)^{b\cdot t})=\frac{\gamma\cdot (1-\beta)^b}{1-(1-\gamma)\cdot (1-\beta)^b}.
    \end{align*}
    Setting $\gamma=b^{-0.5}$ and $\beta=b^{-1.5}$, 
    \begin{align*}
        p_1(b)&=1-\left(\frac{b^{-0.5}\cdot (1-b^{-1.5})}{b^{-0.5}+b^{-1.5}-b^{-2}}\right)^b\\
        &\leq  1-\Bigl(\frac{b^{-0.5}}{b^{-0.5}+2b^{-1.5}}\Bigr)^b\\
        &=1-\Bigl(\frac{2b}{1+2b}\Bigr)^b
        =1-\Bigl(1-\frac{1}{1+2b}\Bigr)^b,
    \end{align*} where the second inequality holds when $b^{-1.5}\leq \frac{1}{2}$. Therefore, $p_1(b)$ tends to $1-e^{-1/2}$ when $b$ goes to infinity.
    On the other hand, 
    \begin{align*}
        p_2(b)&=\frac{b^{-0.5}\cdot (1-b^{-1.5})^b}{1-(1-b^{-0.5})\cdot (1-b^{-1.5})^b} =
        \frac{b^{-0.5}\cdot (1-b^{-1.5})^b}{1-(1-b^{-1.5})^b +b^{-0.5}\cdot (1-b^{-1.5})^b}.
    \end{align*} 
    Hence, $p_2(b)$ tends to $0$ when $b$ goes to infinity since $b^{-0.5}\cdot (1-b^{-1.5})^b$ tends to $0$.
    
    Finally, under both models, those $n_0$ vertices on the right-hand side in \Cref{fig:gadget} will be infected with probability $1$ if $u$ is infected. Therefore, we have
    \begin{equation}\label{eqn:ICtest}
        \begin{aligned}
            \sigma_{{\IC}}(v)&=\sum_{u'\in V} \Pr{\mbox{$u'$ gets influenced by $v$}}\geq  n_0\cdot p_1(b), 
        \end{aligned}
    \end{equation}
    and 
    \begin{equation}\label{eqn:SIRtest}
        \begin{aligned}
            \sigma_{{\SIR}}(v)
            &=\sum_{u'\in V} \Pr{\mbox{$u'$ gets influenced by $v$}}
            \leq (b+1)+ n_0\cdot p_2(b).
        \end{aligned}
    \end{equation}
    Combining \Cref{eqn:ICtest} and \Cref{eqn:SIRtest},  
    \begin{align*}
         \frac{\sigma_{{\SIR}}(v)}{\sigma_{{\IC}}(v)}\leq \frac{(b+1)+ n_0\cdot p_2(b)}{n_0\cdot p_1(b)}.
    \end{align*}
    Since $p_1(b)$ tends to a positive constant $1-e^{-1/2}$ and $p_2(b)$ tends to $0$ as $b\rightarrow\infty$, the ratio $\frac{\sigma_{{\SIR}}(v)}{\sigma_{{\IC}}(v)}$
    tends to $0$ when $b\ll n_0$ and $b$ goes to infinity (for example, we can set $n_0=b^2$ and let $b\rightarrow\infty$).
\end{proof}

\end{document}